\documentclass[acmtog,screen]{acmart}

\AtBeginDocument{%
  }

\setcopyright{cc}
\setcctype{by}
\acmJournal{TOG}
\acmYear{2026}
\acmVolume{45}
\acmNumber{4}
\acmArticle{43}
\acmMonth{7}
\acmDOI{10.1145/3811323}

\PassOptionsToPackage{prologue,table,xcdraw}{xcolor}
\usepackage{xargs}
\usepackage{booktabs} 
\usepackage{tabularx}
\usepackage{caption}
\usepackage{nicematrix}
\usepackage{float}
\usepackage{mathtools}
\usepackage{amsmath,amsfonts}
\usepackage{bm}
\usepackage{microtype}
\usepackage{units}
\usepackage{subcaption}
\usepackage{overpic}
\usepackage{cancel}
\usepackage{wrapfig}
\usepackage{placeins}
\usepackage{empheq}
\usepackage[normalem]{ulem}
\usepackage[export]{adjustbox}
\usepackage[vlined,boxed,ruled]{algorithm2e} 
\usepackage{tikz}
\usepackage{xstring}
\usepackage{xr}
\usepackage{multirow}
\usepackage{makecell}
\usepackage[capitalize,noabbrev]{cleveref} 
\usepackage{url} 
\usepackage[utf8]{inputenc}
\usepackage{enumitem,enumerate}
\usepackage{wrapfig}

\setlist[itemize]{noitemsep, nolistsep, leftmargin=*}

\SetKw{Or}{or}
\SetKw{And}{and}
\SetStartEndCondition{ }{}{}%
\SetKwProg{Fn}{def}{\string:}{}
\SetKwFunction{Range}{range}
\SetKw{KwTo}{in}
\SetKwFor{For}{for}{\string:}{}%
\SetKwFor{While}{while}{:}{fintq}%
\SetKwIF{If}{ElseIf}{Else}{if}{:}{elif}{else:}{}%
\SetKwComment{Comment}{\# }{}
\SetKw{Continue}{continue}
\AlgoDontDisplayBlockMarkers
\SetAlgoNoEnd
\SetAlgoNoLine
\DontPrintSemicolon

\SetAlFnt{\small}
\SetAlCapFnt{\small}
\SetAlCapNameFnt{\small}
\SetAlCapHSkip{0pt}
\IncMargin{-\parindent}

\def\commentType{1}

\ifnum\commentType=0
    \newcommandx{\customComment}[3]{}
    \newcommandx{\customTODO}[3]{}
\fi
\ifnum\commentType=1
    \newcommandx{\customComment}[3]{\textcolor{#2}{\textsl{#1: #3}}}
    \newcommandx{\customTODO}[3]{\textcolor{#2}{\textsl{#1: #3}}}
\fi
\ifnum\commentType=2
    \usepackage{pdfcomment}
    \newcommandx{\customComment}[3]{\pdfcomment[icon=Comment,opacity=0.5,color=#2,author=#1]{#3}}
    \newcommandx{\customTODO}[3]{\pdfcomment[icon=Note,opacity=0.5,color=#2,author=#1]{#3}}
\fi
\ifnum\commentType=3
    
    \usepackage{pdfcomment} 
    \usepackage{todonotes} 
    \usepackage{silence}
    \newcommandx{\customComment}[3]{\todo[color=#2!40,size=\small]{\textbf{#1:} #3}}
    \newcommandx{\customTODO}[3]{\todo[color=#2!40,size=\small]{\textbf{#1:} #3}}
\fi

\let\originalleft\left 
\let\originalright\right 
\renewcommand{\left}{\mathopen{}\mathclose\bgroup\originalleft} 
\renewcommand{\right}{\aftergroup\egroup\originalright} 

\definecolor{amber}{rgb}{1.0, 0.49, 0.0}
\definecolor{darkgreen}{rgb}{0.0, 0.5, 0.0}
\definecolor{darkblue}{rgb}{0.0, 0.0, 0.5}
\definecolor{darkred}{rgb}{0.5, 0.0, 0.0}

\newcommandx{\All}[1]{\customComment{All}{red}{#1}}
\newcommandx{\Cedric}[1]{\customComment{C}{darkred}{#1}}
\newcommandx{\Philip}[1]{\customComment{P}{darkred}{#1}}
\newcommandx{\Misha}[1]{\customComment{M}{darkblue}{#1}}
\newcommandx{\TODO}[1]{\customTODO{TODO}{red}{#1}}
\newcommandx{\todo}[1]{\customTODO{TODO}{red}{#1}}

\newcommand{\REMOVE}[1]{} 

\def\equationautorefname~#1\null{%
  Equation~(#1)\null
}

\newcommand{\norm}[1]{\lVert #1 \rVert}
\renewcommand{\vec}[1]{\boldsymbol{#1}}  
\newcommand{\R}[0]{{\mathbb{R}}}
\newcommand{\Rtwo}[0]{{\mathbb{R}^2}}
\newcommand{\Rthree}[0]{{\mathbb{R}^3}}
\newcommand{\M}[0]{\mathcal{M}}

\newcommand{\Mesh}[0]{\mathcal{M}}
\newcommand{\MeshBoundary}[0]{\partial \mathcal{M}}

\DeclareMathAlphabet{\mathmybb}{U}{bbold}{m}{n}

\newcommand{\bigO}{\mathcal{O}}

\definecolor{cartoPrismTeal}{rgb}{0.21960784 0.65098039 0.64705882}
\definecolor{cartoPrismOrange}{rgb}{0.88235294 0.48627451 0.01960784}
\definecolor{cartoPrismGreen}{rgb}{0.45098039 0.68627451 0.28235294}
\definecolor{cartoPrismRed}{rgb}{0.8 0.31372549 0.24313725}
\definecolor{cartoPrismPurple}{rgb}{0.58039216 0.20392157 0.43137255}

\definecolor{mathematicaBlue}{rgb}{0.38, 0.51, 0.71}
\definecolor{mathematicaOrange}{rgb}{0.88, 0.61, 0.14}
\definecolor{mathematicaGreen}{rgb}{0.56, 0.69, 0.19}
\definecolor{mathematicaRed}{rgb}{0.92,0.39, 0.21}
\definecolor{mathematicaPurple}{rgb}{0.53, 0.47, 0.7}

\newcommand{\StwoP}{S^2_{\vec{p}}}
\newcommand{\T}{\mathcal{T}}
\newcommand{\South}{\vec{x}_0}
\newcommand{\North}{\vec{x}_1}

\newcommand{\SphericalArea}{ \textrm{Area}_{\StwoP}}
\newcommand{\EulerChar}{\mathcal{E}}
\newcommand{\Intersections}{\mathcal{I}}
\newcommand{\VF}{X}
\newcommand{\Weight}{\omega}
\newcommand{\Curve}{\Gamma}

\newif\ifshownew
\shownewtrue

\ifshownew
\newenvironment{isnew}{\color{darkgreen}}{}
\else

\fi

\DeclareMathOperator{\ind}{ind}

\newcommand\encircle[1]{%
	\tikz[baseline=(X.base)]
	\node (X) [draw, shape=circle, inner sep=0] {\strut #1};}

\definecolor{greenish}{RGB}{20,92,44}

\SetCommentSty{mycommfont}

\usepackage{chapterbib}

\definecolor{mygreen}{rgb}{0.019, 0.443,0.07}

\makeatletter
\patchcmd{\@mkbibcitation}{\ref{TotPages}}{11}{}{}
\patchcmd{\@mkbibcitation}{\getrefnumber{TotPages}}{11}{}{}
\makeatother

\newtheorem{theorem}{Theorem}[section]

\newtheorem{lemma}[theorem]{Lemma}
\theoremstyle{definition}

\newtheorem{proposition}[theorem]{Proposition}

\AtBeginDocument{\hypersetup{colorlinks=true, linkcolor=blue, citecolor=blue, urlcolor=blue}}

\begin{document}


\title{The Antipodal Method: Fast, Accurate, and Robust 3D Generalized Winding Numbers}

\author{Cedric Martens}
\authornote{Equal Contribution}
\email{cedric.martens@umontreal.ca}
\affiliation{
  \institution{Université de Montréal}
  \city{Montréal}
  \country{Canada}
}

\author{Philip Trettner}
\authornotemark[1]
\email{trettner@shapedcode.com}
\affiliation{
  \institution{Shaped Code GmbH}
  \city{Aachen}
  \country{Germany}
}

\author{Mikhail Bessmeltsev}
\email{bmpix@iro.umontreal.ca}
\affiliation{
  \institution{Université de Montréal}
  \city{Montréal}
  \country{Canada}
}

\begin{abstract}
Generalized winding numbers provide a robust measure of point insidedness for 3D surfaces—whether open, self-intersecting, or non-manifold—and are central to numerous geometry processing tasks. However, existing methods trade off between accuracy and computational efficiency, limiting their use in interactive and large-scale applications.

We introduce a new formulation and algorithm for computing generalized winding numbers that is both fast and accurate to arbitrary precision, applicable to meshes and parametric surfaces. Our approach expresses the winding number as the sum of two intuitive geometric quantities: the signed number of ray–surface intersections and a boundary integral over the surface’s projection onto the unit sphere. This insight leads to an efficient discretization that avoids expensive surface integrals and spherical arrangements.

For meshes, our method achieves average speedups of $22\times$ on a CPU compared to the fastest precise methods and $3\times$ compared to the fastest approximation method, while maintaining full precision. On a GPU, for moderately complex meshes we reach a throughput of $10^9$ queries per second, or $4K$ generalized winding number slices at 120 FPS ($13\times$ faster than a na\"ive GPU method). For parametric surfaces, our method is on average $5.6\times$ faster than the state-of-the-art method, with the same precision. Our method naturally handles complex topologies and non-manifold inputs. We extensively validate its accuracy, robustness, and time performance. Our code is available at \url{https://github.com/MartensCedric/antipodal}.
\end{abstract}

\begin{CCSXML}
	<ccs2012>
	<concept>
	<concept_id>10010147.10010371.10010396.10010402</concept_id>
	<concept_desc>Computing methodologies~Shape analysis</concept_desc>
	<concept_significance>500</concept_significance>
	</concept>
	<concept>
	<concept_id>10010147.10010371.10010396.10010397</concept_id>
	<concept_desc>Computing methodologies~Mesh models</concept_desc>
	<concept_significance>500</concept_significance>
	</concept>
	<concept>
	<concept_id>10010147.10010371.10010396.10010399</concept_id>
	<concept_desc>Computing methodologies~Parametric curve and surface models</concept_desc>
	<concept_significance>500</concept_significance>
	</concept>
	</ccs2012>
\end{CCSXML}

\ccsdesc[500]{Computing methodologies~Shape analysis}
\ccsdesc[500]{Computing methodologies~Mesh models}
\ccsdesc[500]{Computing methodologies~Parametric curve and surface models}

\keywords{winding number, point containment query, robust geometry processing}

\begin{teaserfigure}
	\includegraphics[width=\textwidth]{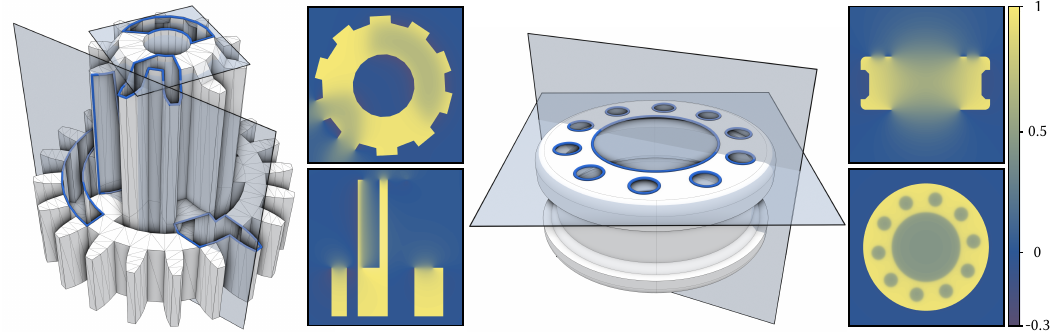}
	\caption{We propose a new method for generalized winding numbers in 3D, computing them as a combination of a ray-surface intersection and a line integral over the surface boundary projected onto a unit sphere. Our method is significantly faster than the state-of-the-art methods for both meshes (left) and parametric surfaces (right), while staying accurate to arbitrary precision and robust to degenerate configurations.}
	\Description{figure description}
	\label{fig:teaser}
\end{teaserfigure}

\maketitle

\section{Introduction}
Generalized winding numbers (GWNs), quantifying \emph{how much} a given point is inside a 2D curve or a 3D surface, be it open, self-intersecting, or even non-manifold, are a standard tool in robust geometry processing. GWNs allow to elegantly handle imperfections of real-world data and thus have found numerous applications ranging from tetrahedral meshing  \cite{ichim2017phace, tetrahedralMeshing2018} and shape generation \cite{metzer2023latent} to mesh booleans \cite{cherchi2022interactive, Trettner2022}, surface reconstruction and normal estimation \cite{wang2022restricted, he2024windpoly,metzer2021orienting,OrientingPointclouds2023}, and design of neural fields \cite{liftingWN2025Chang}.

Such applications, especially interactive ones \cite{cherchi2022interactive}, require computing generalized winding numbers accurately and quickly. Unfortunately, most methods are either accurate but slow \cite{Jacobson13winding,Martens2025WindingNumberOneShot} or fast but approximate \cite{Barill2018FW}. While some of the recent methods do satisfy both requirements, they only target particular parametric surfaces \cite{Spainhour2026} or work solely in 2D \cite{Spainhour2024WN, Liu2025ComplexWN}.

We tackle this challenge by introducing a new method to compute generalized winding numbers on arbitrary meshes and parametric surfaces in 3D, that is both fast and accurate. For meshes, on a CPU, our method is on average $22\times$ (up to three orders of magnitude) faster than the state-of-the-art exact hierarchical method \cite{Jacobson13winding} and is on average $3\times$ faster than the approximation method \cite{Barill2018FW}, while being precise. Our method is embarrassingly parallel and easy to implement on both CPU and GPU. On a GPU, our method is on average $13\times$ faster than a parallelization of the na\"ive method in \cite{Jacobson13winding}. For parametric surfaces, our method is $5.6\times$ faster than the state-of-the-art method by \cite{Spainhour2026}, with the same accuracy.
\begin{figure*}[t]
	\centering
	\includegraphics{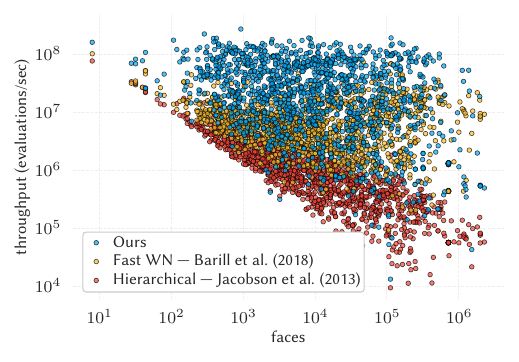}
	\hfill
	\includegraphics{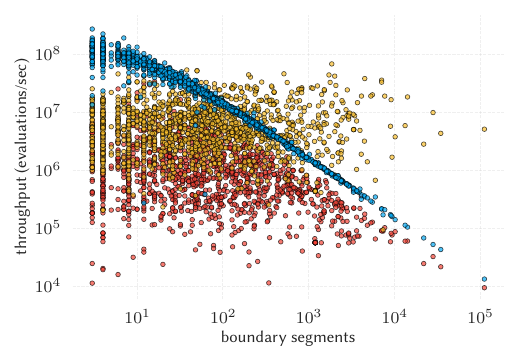}
	\caption{
		Performance comparison (CPU) on meshes: Log-log plot of throughput (in query evaluations per second, the higher the better) vs. number of boundary segments (left) or number of faces (right).
		The fractional part (our bottleneck) of our method scales linearly with the number of boundary segments.
		We compare with \cite{Jacobson13winding} and \cite{Barill2018FW} (order 2). 
		We report evaluation time without precomputation for all methods. }
	\label{fig:performance_boundary}
	\label{fig:performance_faces}
\end{figure*}
Our core insight leading to this efficient algorithm is that a generalized winding number can be expressed as a sum of two terms: the number of signed ray-surface intersections and a surface boundary integral. For both meshes and parametric surfaces, both terms can be computed efficiently and precisely. As we show, the boundary integral is curiously related to the geometry of the surface boundary projected onto a unit sphere: intuitively, it's its total geodesic curvature plus its turning number. This elegant expression gives geometric insight, but more importantly can be directly discretized for meshes and parametric surfaces, and allows us to avoid computing surface integrals \cite{Jacobson13winding} or spherical arrangements \cite{Martens2025WindingNumberOneShot}, yielding a simple-to-implement and fully parallelizable method. Our method is equally efficient for surfaces of arbitrary genus, containing multiple boundaries, or non-manifold geometry.

We extensively validate the performance, robustness, and accuracy of our method, and compare it to the state-of-the-art methods on the Thingi10K dataset \cite{Thingi10K} for meshes and on the dataset from \cite{Spainhour2026, llnl_axom_data} for parametric surfaces.

Our contributions are:

\begin{itemize}
	\item a theoretical result showing that the generalized winding number can be expressed via a signed number of intersections of a ray with the target surface and a line integral of a surface boundary projected onto a unit sphere and
	\item a novel method computing 3D generalized winding numbers for both meshes and parametric surfaces that is significantly faster than previous methods while still being precise.
\end{itemize}

\section{Related Work}

\subsection{Generalized Winding Numbers}

The winding number \cite{meisterGeneraliaGenesiFigurarum1769} is an integer capturing how many times a curve in $\Rtwo$ winds around a query point. The winding number can also be seen as the degree of a covering map, a particular harmonic function, or a particular complex integral \cite{needhamVisualComplexAnalysis1997}, among other interpretations \cite{PerspectivesWindingNumbers2023}.

\citet{Jacobson13winding} introduce generalized winding numbers (GWNs) that extend winding numbers to curves and surfaces with boundary, defined as a signed solid angle, and discretize it to polylines and meshes. Their 3D algorithm computes GWN as the sum of the signed spherical triangle areas, either na\"ively processing each triangle or via an efficient hierarchical algorithm. Our method is significantly faster than both their methods, with similar precision (Sec. \ref{sec:results_performance}). 

\citet{Barill2018FW} introduce a method to compute GWNs for point clouds and triangle soups that can also be used as a fast approximation of GWNs for meshes. As an approximation, their method is incompatible with applications requiring precision. We show that our method is $3\times$ faster than theirs on a CPU, while being accurate to an arbitrary precision (Sec.~\ref{sec:results_performance}).

The One-Shot method \cite{Martens2025WindingNumberOneShot} shows that GWN can be computed via a ray-surface intersection and a sum of signed areas of spherical regions. While their method is efficient for simple boundaries, it requires a computation of a spherical arrangement, which is expensive even for a single boundary component and can have combinatorial complexity for multiple boundary components. Our method also leverages ray-surface intersections, but the rest of our computation is a simple line integral, so 
our method is orders of magnitude faster than theirs (Sec.~\ref{sec:results_performance}).


A few recent papers focus on computing GWNs for parametric curves and surfaces \cite{Spainhour2024WN, Liu2025ComplexWN,Spainhour2026}. \citet{Spainhour2024WN}, \citet{Liu2025ComplexWN}, and \citet{bao2025fastrobustpointcontainment} deal with 2D curves only; our focus is 3D surfaces. In a recent work, \citet{Spainhour2026} formulate GWN via a boundary integral and use ray-surface intersections for query points near the surface. In this form they are similar to our method, yet our expressions are different. Their method only works for parametric surfaces whereas ours also works on meshes. We compare our performances for parametric surfaces in Sec.~\ref{sec:results_performance}; we use their ray-parametric surface intersection algorithm and caching.

A closely related concept to a winding number is a signed solid angle of a space curve, which is the winding number up to a multiple of $4\pi$ (without the normalization factor) \cite{binyshMaxwellTheorySolid2018,chern2024area}. In particular, despite the $4 \pi$ modulo difference, our Lemma \ref{lemma} is closely related to Eq.~8 in \citet{binyshMaxwellTheorySolid2018}, but their formula uses the number of curve self-intersections and so does not lead to an efficient discretization. The main result in \cite{Rogen} is also related to our Lemma \ref{lemma}. In contrast, our formulation is an explicit boundary integral, leading to an efficient discretization and the algorithm.

\subsection{Containment Queries and Ray Casting}
Our method is related to the works that use ray casting for containment queries, for example, to compute voxelizations of geometry with various degeneracies \cite{nooruddin2003simplification,Houston2003}. For instance \citet{nooruddin2003simplification} leverage several ray-surface intersections for a robust notion of insidedness. Our method only needs a single ray-surface intersection per point or fewer for query points in regular grids, such as in voxelization scenarios. For trimmed parametric surfaces, the ray-surface intersection requires a containment test with respect to the trimming curves. It is often done via parity tests \cite{nishita1990ray}, signed area \cite{Martin01012000}, or winding number \cite{Spainhour2024WN}.
\section{Background on Winding Numbers}

For brevity, we omit the word ``generalized'' in description of generalized winding numbers.
We are interested in two different perspectives on winding numbers: as an integral of signed solid angle and as an integral of a signed number of intersections.
 
\subsection{Signed Solid Angle}
Given an oriented 2-manifold (with or without boundary) $\M$ embedded in $\Rthree$ and a point $\vec{p} \in \Rthree \; \backslash \; \M$, the GWN is defined via the signed solid angle (Fig. \ref{fig:solid_angle_vs_chi}):

\begin{equation}
	\label{eq:wn_solid_angle}
	w_\M(\vec{p}) = \frac{1}{4 \pi} \int_\M d\Omega(\vec{p}).
\end{equation}

\citet{Jacobson13winding} compute this integral for a triangle mesh as the sum of signed areas of spherical triangles, i.e., mesh triangles projected\footnote{Here and everywhere else we mean radial projection $\pi_{\vec{p}}(x) = \vec{p}+\frac{\vec{x}-\vec{p}}{\norm{\vec{x}-\vec{p}}}$.} onto a unit sphere $S^2_{\vec{p}}$:

\begin{equation}
	\label{eq:solid_angle_sum}
	w_\M(\vec{p}) = \frac{1}{4 \pi} \sum_{\T \in \M} \textrm{Area}_{S^2_{\vec{p}}}(\T).
\end{equation}


\subsection{Ray Casting}
\label{sec:raycasting}

Alternatively, GWN is the average number of signed intersections $\chi$ between $\M$ and a ray leaving the query point, where an intersection gives $+1$ when the ray has a positive dot product with the surface normal, and $-1$ if negative\footnote{Under simple assumptions, tangential rays with zero dot product form a measure-zero set and thus do not influence the integral.} \cite{PerspectivesWindingNumbers2023}. It can be expressed as an integral over $ S^2_{\vec{p}}$, the unit sphere centered at $\vec{p}$ (Fig. \ref{fig:solid_angle_vs_chi}b):  

\begin{equation}
	\label{eq:wn_int_over_chi}
	w_\M(\vec{p}) = \frac{1}{4 \pi} \int_{\vec{q} \in S^2_{\vec{p}}} \chi(\vec{q}) \; dA.
\end{equation}

\begin{figure}
	\centering
	\includegraphics[width=\linewidth]{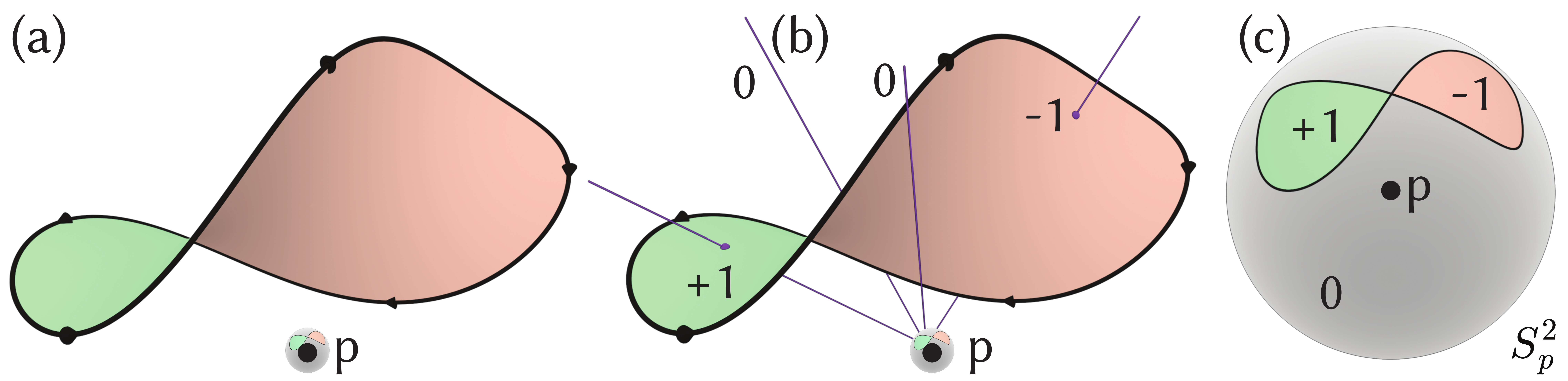}
	\caption{For a given surface, a generalized winding number can be interpreted either as an integral of the signed solid angle (a) or as an average number of signed intersections of rays evenly covering a unit sphere (b). The number of signed intersections, when visualized over the unit sphere, changes by $\pm1$ only when crossing the projected surface boundary, otherwise it is a constant (c).}
	\label{fig:solid_angle_vs_chi}
\end{figure}

As observed in \citet{Martens2025WindingNumberOneShot}, $\chi(\vec{q})$ is constant in each region $\Omega_i$ formed by projecting the curve onto the sphere (Fig.~\ref{fig:solid_angle_vs_chi}c), so using $i^{\text{th}}$ region's $\chi_i$ and its area $A_i$, Eq.~\ref{eq:wn_int_over_chi} is equal to: 

\begin{equation}
	\label{eq:wn_chi_sum}
	w_\M(p) = \frac{1}{4 \pi}\sum_{i=1}^n \chi_i A_i.
\end{equation}
As they discuss, $\chi$ changes by $\pm 1$ each time we cross the projected curve, so knowing one arbitrary $\chi_i$ and the spherical arrangement of regions is enough to compute all other $\chi_j$.

\subsection{Spherical Geometry}
Spherical polygons, and in particular, triangles play a central role in our formulation; we only consider the unit sphere $S^2$. A spherical polygon is a polygon with vertices on a sphere connected with geodesics, which are arcs of great circles. To measure the area of a $N$-sided spherical polygon, Girard's Theorem asserts that one can compute the sum of interior angles minus $(N-2) \pi$. For instance, the (unsigned) area of a spherical triangle $\mathcal{T}$ with the angles $\alpha, \beta, \gamma>0$ is equal to the \emph{spherical excess}:

\begin{equation}
	\label{eq:excess}
	\textrm{UnsignedArea}_{S^2}\left(\mathcal{T}\right) = \alpha + \beta + \gamma - \pi \geq 0.
\end{equation}

 In the general case, three points on the sphere can denote one of the two triangles, a small one or a big one, depending on the orientation. The orientation therefore defines whether this formula computes either the area of the smaller one or the area of its complement, adding up to $4\pi$ in total.
 
 From the computational perspective, a slightly more efficient formula for the \emph{signed} area of a spherical triangle $\mathcal{T}$ with the vertices $\vec{v_0}, \vec{v_1}, \vec{v_2}$ is provided in \citet{Oosterom1983SphericalTriangleArea}:
 \begin{equation}
 	\textrm{Area}_{S^2}\left(\mathcal{T}\right) = 2 \; \mathrm{atan2}\left( \vec{v_0} \cdot (\vec{v_1} \times \vec{v_2}), 1 + \vec{v_0} \cdot \vec{v_1} + \vec{v_0} \cdot \vec{v_2} + \vec{v_1} \cdot \vec{v_2} \right). \end{equation}
 
An (unsigned) area of an arbitrary simply connected spherical region $\Omega$ with the boundary $\partial \Omega$ with $N$ corners connected by curves that are not necessarily geodesic can be computed via the Gauss-Bonnet Theorem \cite{do_carmo_differential_2016}:

\begin{equation}
	\textrm{UnsignedArea}_{S^2}(\Omega) = 2\pi - \int_{\partial \Omega} k_g \, ds - \sum_{i=1}^{N} \theta_i,
	\label{eq:gb_area}
\end{equation}
where $k_g$ is the geodesic curvature of $\partial \Omega$ and $\theta_i$ are the exterior turning angles at the corners. The spherical triangle area formula (Eq.~\ref{eq:excess}) follows from $k_g=0$, as all sides are geodesics; note also that (Eq.~\ref{eq:excess}) uses interior triangle angles, while (Eq.~\ref{eq:gb_area}) uses exterior turning angles.
 
\section{Method}
Although our method is simple to understand and implement for meshes, it has a solid theoretical foundation in the continuous setting that allows us to apply our method to the parametric surfaces. Thus, we first informally describe the intuition for meshes (Sec.~\ref{sec:intuition}), then formally develop the continuous formulation (Sec.~\ref{sec:continuous}), finishing with the discretization and the full algorithm, both for meshes and parametric surfaces (Sec.~\ref{sec:discrete}).

\subsection{Intuition for Meshes}
\label{sec:solid_angle_discretization}
\label{sec:intuition}
\begin{figure}
	\centering
	\includegraphics[width=\linewidth]{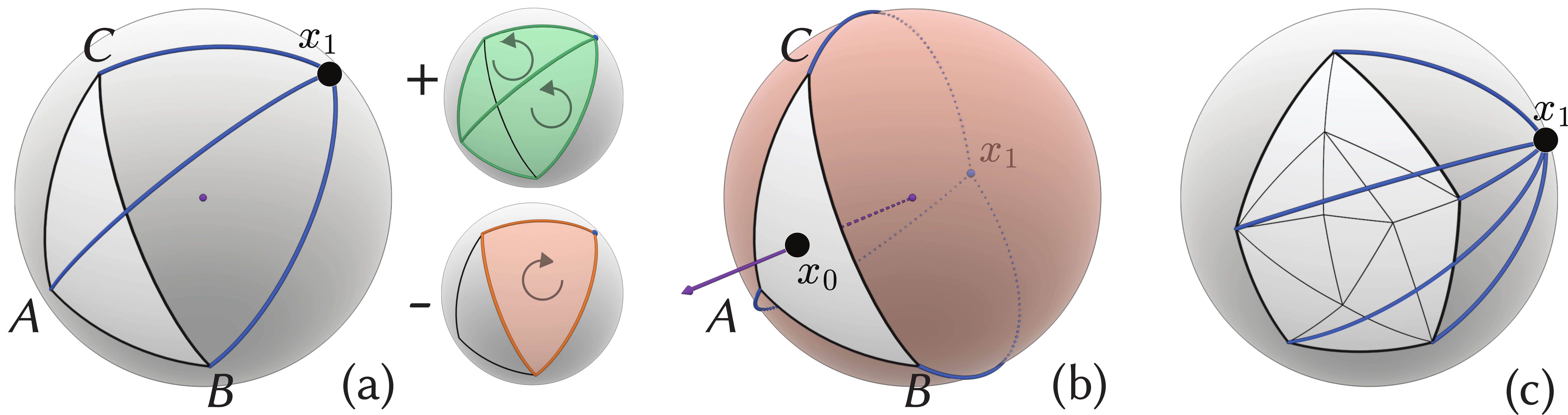}
	\caption{For a spherical triangle $\Delta ABC$ we form three auxiliary triangles $\Delta ABx_1, \Delta BCx_1, \Delta CAx_1$, connecting each edge with an arbitrary point $\North$ with the shortest geodesics (blue). Depending on the location of $\North$, the sum of signed areas of these auxiliary triangles is either (a) equal to the sum of the triangle $\Delta ABC$ (inset, two green positive areas, one negative red area) or (b) to the area of its complementary region on the sphere --- it depends on whether the antipodal point $\South = -\North$ is inside or outside the triangle. Summing this for a whole mesh, most areas cancel out, leaving the sum of areas of the auxiliary triangles adjacent to the boundary segments (c).}
	\label{fig:intuition}
\end{figure}

Inspired by the shoelace formula \cite{braden1986surveyor} to compute an area of a simple planar polygon as a sum of signed areas of trapezoids, we develop a similar formula for spherical polygons. We observe that the signed area of each spherical triangle $\SphericalArea(\T)$ (Eq.~\ref{eq:solid_angle_sum}) can be expressed via a sum of signed areas of three \emph{auxiliary} triangles, formed by connecting each triangle vertex with an arbitrary point $\North \in \StwoP$ via the shortest geodesics (Fig.~\ref{fig:intuition}). 

Depending on the location of $\North$ with respect to the triangle, the sum of three auxiliary triangle areas is equal to the area of either the triangle itself or of its complement (Fig.~\ref{fig:intuition}ab). This choice depends on whether the \textit{antipodal} point $\South \coloneq -\North$ is inside or outside the triangle. For a spherical triangle $\Delta ABC$, this sum is $\SphericalArea(\T)$ if $\South \in \Delta ABC$ and $\pm 4\pi + \SphericalArea(\T)$ otherwise, $``+"$ for a negatively oriented triangle and $``-"$ for a positively oriented one.
 
To compute the winding number, we can sum up those three auxiliary triangle areas for each mesh triangle, yielding the sought projected area plus $4\pi \chi(\South), \chi(\South) \in \mathbb{Z}$. As we add a $\pm 4\pi$ each time $\South$ is in a triangle, $\chi(\South)$ is equal to the number of times the surface projection covers (taking orientation into account) $x_0$. Thus, to compute $\chi(\South)$ one can shoot a ray $\vec{p}\South$ and count the number of signed intersections with the surface (Fig.~\ref{fig:solid_angle_vs_chi}).

Thus,
\begin{equation}
	\sum_i \SphericalArea(\T_i) = 4\pi \chi(\South) + \sum_i \sum_{j=1}^3 \SphericalArea(\T^j_i).
	\label{eq:sum_of_areas}
\end{equation}

Finally, since all the interior triangle edges have two equivalent adjacent auxiliary triangles with opposite signs, their areas cancel out. Combining Eqs.~\ref{eq:solid_angle_sum} and \ref{eq:sum_of_areas}, the GWN can be computed as:

\begin{equation}
	\label{eq:anchored_triangle_sum}
	w_M(\vec{p}) = \chi(\South) + \frac{1}{4 \pi}\sum_{(\vec{a},\vec{b}) \in \MeshBoundary} \SphericalArea(\vec{a}, \vec{b}, \North),
\end{equation}
where $\chi(\South)$ is the number of signed intersections of a ray $\vec{p}\South$ with the mesh (Fig.~\ref{fig:intuition}c).  For a closed mesh it is just an inside-outside test. 

This formula is the entirety of our method for meshes (Alg.~\ref{alg:wnr}).

\begin{algorithm}
	\caption{GWN at a point $\vec{p}$ for a mesh $\Mesh$ with boundary $\partial \Mesh$ }
	\label{alg:wnr}
	
	\KwIn{$p$, $\Mesh$}
	$\vec{x_0} \gets \mathrm{RandomUnitVector()}$ \tcp*{Arbitrary ray direction}
	$\vec{x_1} \gets -\vec{x_0}$ \tcp*{Antipodal point}
	
	$Area \gets 0$\;
	\ForEach{$(\vec{a},\vec{b}) \in \partial \Mesh \;$}{
		$\vec{A} \gets \frac{\vec{a} - \vec{p}}{\lVert \vec{a} - \vec{p}\rVert}$ \tcp*{Unit Sphere Projection}
		$\vec{B} \gets \frac{\vec{b} - \vec{p}}{\lVert \vec{b} - \vec{p}\rVert}$\;
		$Area \gets Area + \mathrm{SignedSphericalTriArea}(\vec{A}, \vec{B}, \vec{x_1})$\;
	}
	$\chi \gets \mathrm{SignedIntersectionNumber}(\vec{p},\vec{x_0},\Mesh)$ \;
	\Return{$\chi + \dfrac{Area}{4\pi}$} \tcp*{Eq.~\ref{eq:anchored_triangle_sum}}
	
	\Fn{SignedSphericalTriArea($\vec{v_0},\vec{v_1},\vec{v_2}$)}{
		\Return{$2 \: \mathrm{atan2}\left( \vec{v_0} \cdot  (\vec{v_1} \times \vec{v_2}), 1 + \vec{v_0} \cdot \vec{v_1} + \vec{v_0} \cdot \vec{v_2} + \vec{v_1}\cdot \vec{v_2} \right)$}
	}
\end{algorithm}

\subsection{Continuous Formulation}
\label{sec:continuous}
Our goal is to compute the sum in Eq.~\ref{eq:wn_chi_sum}. We first note $\chi(\vec{q})$ follows the Alexander numbering \cite{alexander1928topological}: its value is constant in every region, increments by $+1$ or $-1$ whenever we cross the projected boundary $\Curve = \mathrm{Proj}_{S_p^2}\partial \M$ from the right or the left, respectively (Fig.~\ref{fig:solid_angle_vs_chi}c) \cite{PerspectivesWindingNumbers2023,Martens2025WindingNumberOneShot}. All integer-valued functions following the Alexander numbering on a sphere for a given closed curve $\Curve$ are the same up to a constant shift: $f(\vec{q}) = g(\vec{q}) + c,$ where $c\in \mathbb{Z}$. We call such functions \emph{spherical winding numbers} \cite{PerspectivesWindingNumbers2023} (Fig.~\ref{fig:spherical_wn}).

\begin{figure}
	\includegraphics[width=0.45\columnwidth]{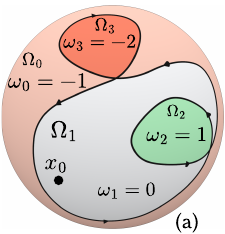}
	\hfill
	\includegraphics[width=0.45\columnwidth]{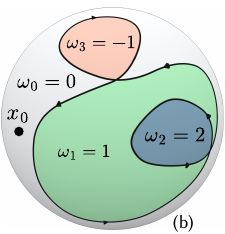}
	\caption{Given a curve $\Gamma$ dividing the sphere into regions $\Omega_i$, we call integer-valued functions $\Weight$ satisfying the Alexander numbering spherical winding numbers (a,b). They can only differ by an integer and thus can be uniquely defined by a point $x_0$ where $\Weight=0$.}
	\label{fig:spherical_wn}
\end{figure}

For an arbitrary point $\South$ on the sphere $\StwoP$, let us take the unique spherical winding number $\Weight_{\South}(\vec{q})$ such that $\Weight_{\South}(\South)=0$ (Fig.~\ref{fig:spherical_wn}). As a shorthand notation, its value in the $i^{\text{th}}$ region is $\Weight_i$; we will also omit the subscript $\South$ where there is no ambiguity. Then $\chi_i = \chi(\South)+\Weight_i$ and the sum in Eq.~\ref{eq:wn_chi_sum} can be rewritten as:
\begin{equation}	
	\label{eq:int_with_offset}
w_\M(\vec{p}) = \frac{1}{4 \pi}\sum_{i=1}^n \chi_i A_i = \chi(\South) + \frac{1}{4 \pi}\sum_{i=1}^n \Weight_i A_i,
\end{equation}
as $\sum_i A_i = 4\pi$, since the regions cover the sphere once.

The first term can be computed via ray intersections (Sec.~\ref{sec:raycasting}). Below we prove a result that allows us to efficiently compute the second term via line integrals. A related statement has been mentioned by \cite{generalizedGaussBonnet}, albeit without the full proof. 

First, we generalize the standard notion of a turning number on a plane, also known as a Whitney index, to a surface, following \cite{burmanWhitneysFormulasCurves2011}. For a vector field\footnote{Vector fields are assumed to be tangential, i.e., sections of the tangent bundle.} $\VF$ on $\StwoP$ with no zeros on a smooth\footnote{We later relax the smoothness requirement in Prop.~\ref{prop}.} curve $\Curve \subset S_p^2$ that might have transversal self-intersections,  the curve's Whitney index $\text{ind}(\Curve,\VF)$ is the number of rotations of the tangent $\Curve'(t)$ relative to $\VF$. More formally, if $\theta$ measures the angle between $\VF$ and $\Curve'$, then
\begin{equation}
	\ind(\Curve,\VF) = \frac{1}{2\pi} \int_\Curve d\theta.
	\label{eq:index}
\end{equation}

\begin{figure}
	\includegraphics[width=\linewidth]{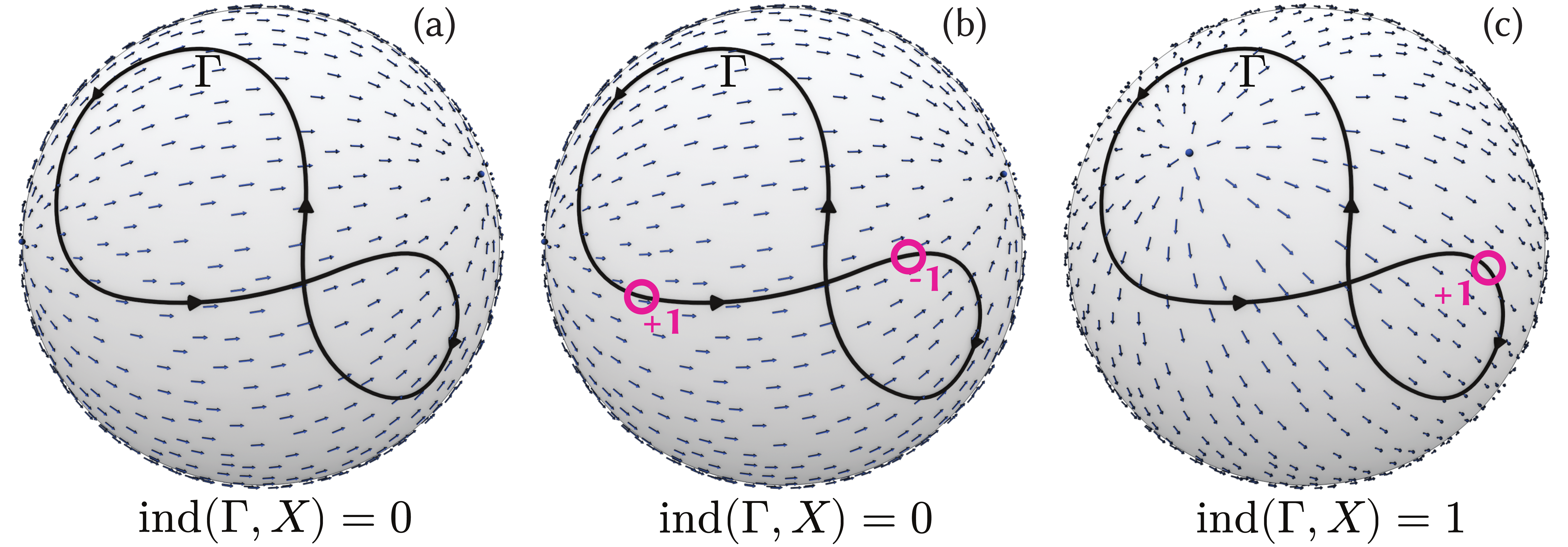}
	\caption{For a given vector field $\VF$ on a sphere, the Whitney index of a curve $\Curve$ can be defined as the number of rotations of the curve tangent $\Curve'$ relative to $\VF$ (a). Equivalently, it can be computed as the number of the points where the tangent is codirectional with $\VF$, counted as $+1$ or $-1$ depending on whether $\Curve'$ turns counterclockwise or clockwise around those points, respectively (b). The index depends on the vector field choice (c).}
	\label{fig:curve_index}
\end{figure}

The index can be equivalently calculated as the number of points at which $\Curve'(t)$ looks in the direction of $\VF$, i.e., they are codirectional; each such point is counted with a positive sign if $\Curve'(t)$ turns counter-clockwise relative to $\VF$ in a neighborhood of $\Curve(t)$ and with a negative sign otherwise (Fig.~\ref{fig:curve_index}ab). Note that the index depends on the choice of vector field $\VF$ \cite{mcintyreNewFormulaWinding1993a} (Fig.~\ref{fig:curve_index}c). 

Now we formulate a result which will form the core of our algorithm, first for a rather constrained vector field $\VF$, then by relaxing this constraint. Let $\VF$ be a vector field on the unit sphere $\StwoP$ with singularities of index 1 only (by Poincar\'e-Hopf, there will be exactly two of them), located in the regions with $\Weight_i=0$.

\begin{lemma}
	\label{lemma}
	Under the previous assumptions,
	\begin{equation} \sum_{i=1}^n \Weight_i A_i = 2\pi \ind(\Curve,\VF) - \int_\Curve k_g \; ds,
	\end{equation}
	where $k_g$ is the geodesic curvature of curve $\Curve$.
\end{lemma}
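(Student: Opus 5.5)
Our plan is to compute $\sum_i \Weight_i A_i$ region by region with Gauss--Bonnet on the sphere (Eq.~\ref{eq:gb_area}), using the vector field $\VF$ as a global framing in which tangent rotation is measured. By linearity of both sides in the curve, we may assume without loss of generality that $\Curve$ is a single closed curve; with several boundary components, each side is additive over components.

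First, we express the left-hand side as an integral over the curve. For a region $\Omega_i$ with piecewise-smooth boundary, Gauss--Bonnet relative to $\VF$ reads
\begin{equation*}
A_i = \int_{\Omega_i} K\,dA = 2\pi\,\sum_{s\in\Omega_i}\ind_s(\VF) - \oint_{\partial\Omega_i}(k_g\,ds - d\theta),
\end{equation*}
where $\theta$ is the angle from $\VF$ to the boundary tangent. This holds because the curvature form integrates to the sum of singularity indices minus the boundary rotation measured against $\VF$. The exterior corner angles at self-intersections are absorbed into jumps of $\theta$. Since $K=1$, the left side is the area. Multiplying by $\Weight_i$ and summing, the singular terms vanish, because both singularities lie in regions with $\Weight_i=0$. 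This is exactly where the hypothesis on $\VF$ is used.

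Second, we recombine the boundary terms. Each smooth arc of $\Curve$ between crossings bounds exactly two regions, one on its left and one on its right. The Alexander numbering says their weights differ by one, with the left region carrying the larger weight under the convention that crossing from the right adds $+1$. Orient each $\partial\Omega_i$ positively, so the region lies on the left. The arc then appears with its own orientation in the left region and reversed in the right region. Reversing the arc flips the sign of $k_g\,ds$, and of $d\theta$ along the arc as well, since reversing the tangent shifts $\theta$ by the constant $\pi$. Hence each arc contributes $(\Weight_L-\Weight_R)(-\int k_g\,ds + \int d\theta) = -\int k_g\,ds + \int d\theta$. Summing over all arcs gives $-\int_\Curve k_g\,ds$ plus the smooth part of the tangent rotation along $\Curve$.

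Third, and this is the main obstacle, we must account for the corner contributions. At each transversal self-intersection, the four regions meeting there each see a turning jump of $\theta$. We need the weighted sum of these exterior angles to vanish. The weights around a crossing are $w$, $w+1$, $w+1$, $w+2$ (or $w$, $w+1$, $w$, $w-1$, depending on orientation), and the exterior angles are $\pi-\alpha$, $\pi-(\pi-\alpha)$, and so on. We will check directly, in local coordinates, that $\sum \Weight\cdot(\text{turning jump})$ at the crossing equals the jump that $\Curve$ itself has there, namely zero, since $\Curve$ is smooth through the crossing. The cleanest route is to smooth the crossing in an orientation-respecting way (Seifert resolution). This preserves the Alexander weights away from the crossing and changes neither the total geodesic curvature nor the index in the limit. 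Resolving all crossings reduces the problem to a disjoint union of simple closed curves, for which the identity is the standard Gauss--Bonnet computation above with no corners at all. With the corners handled, the smooth tangent rotation sums to $\int_\Curve d\theta = 2\pi\ind(\Curve,\VF)$ by Eq.~\ref{eq:index}, which is the lemma.
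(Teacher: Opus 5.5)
Your Steps 1 and 2 are correct, and together they already prove the lemma by a route different from the paper's. The problem is Step 3: it addresses an obstacle that does not exist, and the claim it rests on is false.

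In the vector-field Gauss--Bonnet formula of Step 1, look at any corner of $\partial\Omega_i$. The exterior angle subtracted by Gauss--Bonnet and the jump of $\theta$ picked up by the boundary rotation against $\VF$ are the same angle, so they cancel. Equivalently, $k_g\,ds-d\theta=d\eta$ is the rotation of $\VF$ relative to parallel transport. It depends only on the point, not on the tangent, so it carries no point mass at corners. No corner terms survive. In Step 2 the arcs concatenate to the smooth closed curve $\Curve$, whose tangent never jumps, so the summed tangent rotation is exactly $\int_\Curve d\theta=2\pi\ind(\Curve,\VF)$.

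The identity you propose to check in local coordinates is false. At a crossing, opposite regions have equal interior angles and equal weight sums: $w+(w+2)=2(w+1)$, resp.\ $w+w=(w+1)+(w-1)$. Hence the weighted sum of exterior angles is $\frac{\pi}{2}\sum_{j=1}^4\Weight_{kj}$, i.e.\ $2\pi(w+1)$ or $2\pi w$, not $0$. Drop that claim.

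Your Seifert-smoothing fallback is sound but redundant. The two smoothing arcs turn by $+\phi$ and $-\phi$, the index is additive over Seifert circles, and the weights change only on sets of vanishing area.

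Also drop the reduction to a single component. The hypothesis on the singularities is not inherited by components: their winding numbers at a singularity can be $+1$ and $-1$ with sum $0$. Steps 1--2 never use connectedness anyway.

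Compared with the paper: the paper applies classical Gauss--Bonnet with $\EulerChar_i=1$ and keeps exactly the corner term $\frac{\pi}{2}\sum_{k\in\Intersections}\sum_j\Weight_{kj}$ computed above. It then proves the combinatorial identity $\ind(\Curve,\VF)=\sum_i\Weight_i-\frac14\sum_{k\in\Intersections}\sum_j\Weight_{kj}$ by induction on self-intersections, peeling off an innermost loop via Chillingworth's lemma. The singularity hypothesis enters only in the simple-curve base case.

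Your version folds $\VF$ into Gauss--Bonnet (Stokes for the connection form of $\VF/\lVert\VF\rVert$). Indices of singular points replace Euler characteristics, and the hypothesis kills them at once. This gives you several things:
\begin{itemize}
\item a shorter proof with no induction and no crossing bookkeeping;
\item no need for the regions to be disks, whereas the paper's $\EulerChar_i=1$ step needs $\Curve$ connected;
\item the form $-\int_\Curve d\eta$ used in Prop.~\ref{prop:simpler}, obtained directly.
\end{itemize}
The paper's route uses only the elementary Gauss--Bonnet formula. As a by-product it yields a Whitney-type formula expressing the index through the Alexander numbering.
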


The proof is in the Supplementary. To support piecewise smooth curves, typical for boundaries of both meshes and parametric surfaces, we generalize index (Eq.~\ref{eq:index}): If the boundary $\Curve = \cup_i \Curve_i$, where each $\Curve_i$ is smooth, then
\begin{equation}
 	\text{ind}(\Curve,\VF) = \frac{1}{2\pi} \left( \int_\Curve d\theta  + \sum_i \theta_i \right),
 	\label{eq:index_piecewise}
\end{equation}
where $\theta_i$ is the turning angle between the $i^\mathrm{th}$ and the next curve. The geodesic curvature integral should be adjusted accordingly. Now we are ready to formulate our main result that allows for fast computations of winding numbers, where none of the terms require surface integrals:

\begin{proposition} For a vector field $\VF$ on a unit sphere $\StwoP$ with two singularities $\South, \North$ of index 1,
	\begin{equation}w_\M(\vec{p}) = \chi(\South) + \frac{1}{2} (\ind(\Curve,\VF)+\Weight(\North)) - \frac{1}{4\pi}\left( \int_\Curve k_g \; ds + \sum_i \theta_i \right).
	\label{eq:prop}
	\end{equation}
	\label{prop}
\end{proposition}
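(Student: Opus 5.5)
We will reduce the proposition to Lemma~\ref{lemma} by correcting for the fact that $\North$ need not lie in a region where $\Weight_{\South}=0$. We start from Eq.~\ref{eq:int_with_offset}, which gives
\[
w_\M(\vec{p}) = \chi(\South) + \frac{1}{4\pi}\sum_i \Weight_i A_i
\]
with $\Weight=\Weight_{\South}$, so it suffices to evaluate $\sum_i \Weight_i A_i$ when the singularities of $\VF$ are at $\South$ (where $\Weight=0$ by construction) and at $\North$, where $\Weight(\North)=:m$ may be any integer. The piecewise-smooth case is handled throughout by the convention of Eq.~\ref{eq:index_piecewise}: corners contribute $\theta_i$ both to the index and, with opposite sign, to the curvature term. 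This is consistent with Gauss--Bonnet (Eq.~\ref{eq:gb_area}), since one may round each corner in a small neighborhood, and the contributions converge as the rounding shrinks. We will therefore argue for smooth $\Curve$ and add $\sum_i\theta_i$ next to $\int k_g$ at the end.

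\textbf{Step 1 (effect of moving a singularity).} We fix a reference field $\VF_0$ with index-1 singularities at $\South$ and at some point $\North'$ in a region where $\Weight=0$ (for example, $\North'$ close to $\South$). Lemma~\ref{lemma} then gives
\[
\sum_i\Weight_iA_i = 2\pi\ind(\Curve,\VF_0)-\int_\Curve k_g\,ds.
\]
Next we compare $\ind(\Curve,\VF)$ with $\ind(\Curve,\VF_0)$. Both fields are nonvanishing on $\Curve$. Their relative rotation along $\Curve$ is a degree, and it changes only when a zero crosses $\Curve$. Concretely, we move the second singularity from $\North'$ to $\North$ along a path transverse to $\Curve$, keeping the field nonzero on $\Curve$ except at crossing instants. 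At each crossing of $\Curve$ the index $\ind(\Curve,\cdot)$ jumps by $\pm 1$. Meanwhile $\Weight$ at the singularity's location jumps by $\pm1$ according to the Alexander numbering. The claim to establish is that these jumps are coupled: each crossing changes $\ind$ by $-2$ times the change of $\Weight$, or some fixed multiple. We will determine the multiple by a local model. Near a short straight arc of $\Curve$, compare a field with an index-1 zero just left of the arc to one with the zero just right. The rotation of the field along the arc differs by roughly one full turn, while the far part of the curve is unaffected. Since the value of $\Weight$ also changes by one, this yields
\[
\ind(\Curve,\VF)=\ind(\Curve,\VF_0)+c\,m
\]
for a universal sign $c=\pm1$.

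\textbf{Step 2 (fix the constant and combine).} Substituting into the Lemma gives
\[
\sum_i\Weight_iA_i=2\pi(\ind(\Curve,\VF)-c\,m)-\int k_g.
\]
This has the wrong shape: the statement has $\tfrac12(\ind+\Weight(\North))$, i.e.\ $2\pi\ind$ replaced by $\pi\ind+\pi m$. To reconcile this, we expect that the single-turn estimate in Step 1 is off by a factor, and the correct local computation is $\ind(\Curve,\VF)-\ind(\Curve,\VF_0)=2\,c\,m$, so that half of the index difference compensates. We then need a symmetrization. The roles of $\South$ and $\North$ are symmetric up to swapping the base point, and $\Weight_{\North}=\Weight_{\South}-m$. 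Averaging the two formulas, one anchored at each singularity, yields $\pi\ind(\Curve,\VF)+\pi m-\int k_g$. Concretely, we will compute the Lemma-type identity with base point $\North$, use $\sum_iA_i=4\pi$ to shift $\Weight$, and add the two identities. Dividing by $4\pi$ gives Eq.~\ref{eq:prop}.

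\textbf{Main obstacle.} The delicate step is the precise local computation of how $\ind(\Curve,\VF)$ changes when a singularity of $\VF$ crosses $\Curve$, including its sign relative to the Alexander increment, with the correct factor (one turn versus two). We would try it first on a test case: a small circle $\Curve$ and the standard north--south field. There $\ind=\pm1$, $k_g$ is explicit, and the cap areas are known. We would check both $\North$ inside and outside the cap, which pins down the constants unambiguously before writing the general crossing argument.
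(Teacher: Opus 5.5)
Your Step 1 is the paper's argument, and once you fix the sign to $c=-1$ it already proves the proposition. The genuine problem is Step 2, which starts from an arithmetic slip and then tries to repair a discrepancy that does not exist. Subtract $\chi(\South)$ from Eq.~\ref{eq:prop} and multiply by $4\pi$, using Eq.~\ref{eq:int_with_offset}. This gives
\[
\sum_i \Weight_i A_i = 2\pi\bigl(\ind(\Curve,\VF)+m\bigr) - \int_\Curve k_g\,ds - \sum_i \theta_i ,
\]
and not ``$\pi\ind+\pi m$''. So your Step 1 relation $\ind(\Curve,\VF_0)=\ind(\Curve,\VF)+m$, substituted into Lemma~\ref{lemma}, is exactly the claim. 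The paper proceeds in just this way. The case $\Weight(\North)=0$ is Lemma~\ref{lemma} plus Eq.~\ref{eq:int_with_offset}. Moving $\North$ across $\Curve$ changes $\ind(\Curve,\VF)$ by $\mp1$ and $\Weight(\North)$ by $\pm1$, so $\ind(\Curve,\VF)+\Weight(\North)$ does not depend on where the singularity sits.

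Step 2, as written, would derive a false identity. Your one-turn local count is correct, with no factor of two. Along the arc, the field's rotation with the zero on one side minus its rotation with the zero on the other side equals its rotation around a small loop enclosing the zero, which is its index $1$. The base-point averaging does not help either. Re-anchoring at $\North$ puts $\South$ in a region where $\Weight_{\North}=-m\neq 0$, so Lemma~\ref{lemma} is unavailable there too. Moreover, your target identity is itself false, so no averaging of correct identities can produce it.

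Your own test case shows this. Take a circle of geodesic radius $\phi$ about $\North$, oriented so the cap containing $\North$ has $\Weight_{\South}=1$, with the meridian field. Then:
\begin{itemize}
\item $\ind(\Curve,\VF)=0$, since the field is everywhere normal to $\Curve$;
\item $\ind(\Curve,\VF_0)=1$ and $m=1$;
\item $\int_\Curve k_g\,ds=2\pi\cos\phi$.
\end{itemize}
The cap area $2\pi(1-\cos\phi)$ equals $2\pi(0+1)-2\pi\cos\phi$, as the statement predicts. Your Step 2 target instead predicts $\pi-2\pi\cos\phi$, and the index difference here is $-1$, not $\pm 2$.

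So pin down $c=-1$ (this example does it), delete Step 2, and the proof is complete. Your corner-rounding justification of the piecewise-smooth convention is a reasonable addition, since the paper only asserts that the curvature term is ``adjusted accordingly.''
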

\begin{proof}
	If $\Weight(\North)=0$, this follows from Lemma \ref{lemma} and Eq.~\ref{eq:int_with_offset}. To prove the general case, we observe that as we move the singularity point $\North$ across the curve, $\ind(\Curve,\VF)$ decreases by one  when we enter the curve from the left and increases by one if we enter the curve from the right. The Alexander numbering changes in precisely the opposite way, i.e., decreasing and increasing by one respectively, so $\ind(\Curve,\VF) + \Weight(\North)$ does not depend on the singularities' locations.
	\end{proof}
	
Finally,  we observe that if we substitute the definition of the index (Eq.~\ref{eq:index_piecewise}) into Eq.~\ref{eq:prop}, the terms $\sum_i \theta_i$ cancel. Furthermore, the quantity $d\theta$ in Eq.~\ref{eq:index_piecewise} can be decomposed into two parts: the rotation of the curve tangent, and the rotation of the vector field with respect to that tangent. Along the curve, the tangent rotates at the rate $k_g$ with respect to the parallel transport. If the vector field makes the angle $\eta$ with respect to the parallel transport along the curve, then $d\theta = k_g \; ds - d\eta$ (c.f. Ch.4.4, Lemma 2 in \cite{do_carmo_differential_2016}). As a result, $\frac{1}{2} \ind(\Curve,\VF) - \frac{1}{4\pi}\left( \int_\Curve k_g \; ds + \sum_i \theta_i \right) = -\frac{1}{4\pi}\int_\Gamma d\eta$. This yields the final formula:

\begin{proposition}
	\begin{equation}
	w_\M(\vec{p}) = \chi(\South) +\frac{1}{2}\Weight(\North) - \frac{1}{4\pi} \int_\Gamma d\eta.
	\label{eq:simpler_prop}
\end{equation}
\label{prop:simpler}
\end{proposition}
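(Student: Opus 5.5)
The plan is to start from Proposition~\ref{prop} and substitute the piecewise definition of the index, Eq.~\ref{eq:index_piecewise}. This is essentially the derivation sketched just above the statement, so the task is to make each step precise. Writing $2\pi\,\ind(\Curve,\VF) = \int_\Curve d\theta + \sum_i \theta_i$, the half-index term becomes $\frac{1}{4\pi}\left(\int_\Curve d\theta + \sum_i\theta_i\right)$. The turning-angle sums $\sum_i\theta_i$ then cancel against the corner contribution in the last term of Eq.~\ref{eq:prop}. What remains is
\begin{equation*}
w_\M(\vec{p}) = \chi(\South) + \tfrac{1}{2}\Weight(\North) + \tfrac{1}{4\pi}\int_\Curve \left(d\theta - k_g\, ds\right).
\end{equation*}
Here $\int_\Curve d\theta$ is taken along the smooth pieces only, with the jumps at the corners already accounted for by the $\theta_i$.

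The key step is the pointwise identity $d\theta = k_g\,ds - d\eta$ on each smooth arc $\Curve_i$. Fix a parallel unit field $P(s)$ along $\Curve_i$. Let $\phi(s)$ be the angle from $P$ to the unit tangent $\Curve'$, and let $\eta(s)$ be the angle from $P$ to $\VF/\lVert\VF\rVert$. Both are well defined as continuous real lifts because $\VF$ has no zeros on $\Curve$ and $\Curve_i$ is smooth. By the standard fact cited from do~Carmo (Ch.~4.4, Lemma~2), the covariant derivative of the tangent satisfies $d\phi/ds = k_g$. The angle from $\VF$ to $\Curve'$ is $\theta = \phi - \eta$, hence $d\theta = k_g\,ds - d\eta$. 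Integrating over each arc and summing gives $\int_\Curve (d\theta - k_g\,ds) = -\int_\Curve d\eta$, which yields Eq.~\ref{eq:simpler_prop}.

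I expect the main obstacle to be the orientation and sign conventions, not the computation. Three conventions must agree: the sign of $k_g$ (measured with respect to the outward normal of $\StwoP$, consistent with Lemma~\ref{lemma}), the orientation of the angle $\theta$ in Eq.~\ref{eq:index}, and the sign of the turning angles $\theta_i$. If any one is flipped, the corner terms add rather than cancel. I would check all three on a single small positively oriented geodesic triangle. In that case $k_g=0$, and the frame along each edge is parallel, so $\eta$ is just the rotation of $\VF$ against the geodesic frame. The corner sums must cancel exactly, which is the case already built into Eq.~\ref{eq:index_piecewise} and Eq.~\ref{eq:prop}.

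A second subtlety is that $\int_\Gamma d\eta$ is not the integral of an exact form around a closed curve. The parallel frame is transported arc by arc and is not continuous across corners. Parallel transport around a closed loop also incurs holonomy equal to the enclosed area, so $\eta$ is only defined piecewise. I would therefore define $\int_\Gamma d\eta$ explicitly as the sum over smooth arcs of the total change of $\eta$ on each arc. This is exactly what appears after the cancellation, so no global lift is needed and the formula stands as stated.
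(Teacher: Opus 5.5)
Your proposal is correct and follows the paper's own derivation: substitute Eq.~\ref{eq:index_piecewise} into Prop.~\ref{prop}, cancel the $\sum_i \theta_i$, and apply the decomposition $d\theta = k_g\,ds - d\eta$ from do~Carmo's Lemma~2 (Ch.~4.4). One small correction to your last paragraph: the parallel field along a piecewise smooth curve is continuous across the corners, since only the tangent jumps there, so $\eta$ is continuous and $\int_\Gamma d\eta$ is simply the integral of a well-defined 1-form; this agrees with your arc-by-arc definition.
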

 
Simply put, the GWN consists of a signed intersection number at one point and two boundary terms: a spherical winding number at another point and the total rotation, relative to parallel transport, of a vector field with singularities at these two points. We use Prop.~\ref{prop} and Prop.~\ref{prop:simpler} directly for efficient algorithms.
	

\subsection{Triangle Meshes}
\label{sec:discrete}
For a triangle mesh, we follow our discrete intuition (Sec.~\ref{sec:intuition}) and show that Alg.~\ref{alg:wnr} computes Eq.~\ref{eq:prop} (and therefore Eq.~\ref{eq:simpler_prop}).

We pick a vector field that is the gradient of a height function, with the sink and the source at $\South$ and its antipodal point $\North$ (Fig.~\ref{fig:discretization}a), respectively. Then applying the formula in Eq.~\ref{eq:prop} to the mesh yields Eq.~\ref{eq:anchored_triangle_sum}. To see that, note that $\Curve$ for a mesh is a spherical polygon. For our choice of the vector field, its flowlines form the sides of the geodesic triangles connecting each segment of the polyline with $\North$ as the vertex, with angles $\alpha_i, \beta_i, \gamma_i > 0$ and the triangle orientation $\delta_i = \pm1$ (Fig.~\ref{fig:discretization}b), so:

\begin{figure}
	\includegraphics[width=0.35\linewidth]{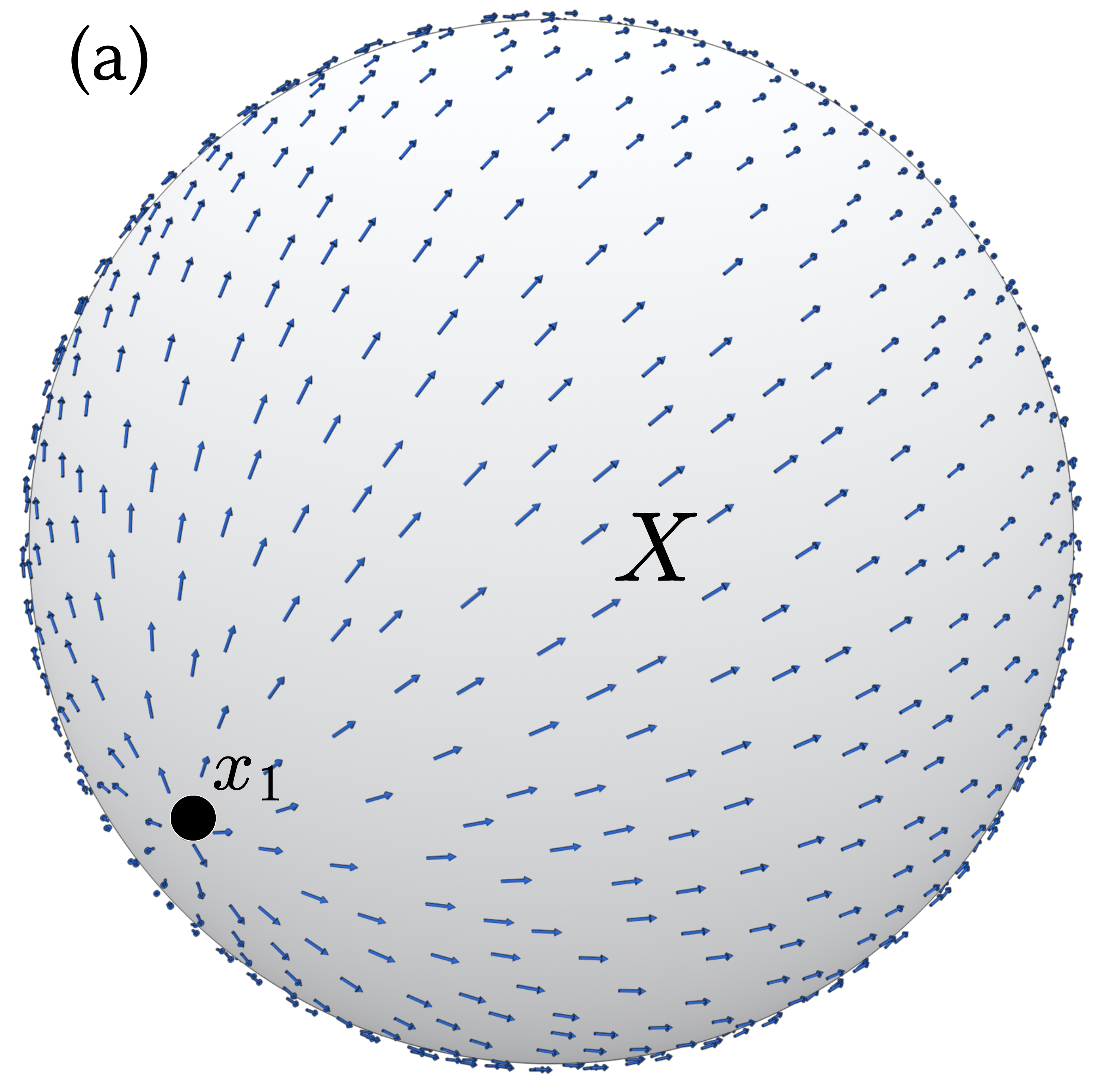}
	\hspace{2.5em}
	\includegraphics[width=0.35\linewidth]{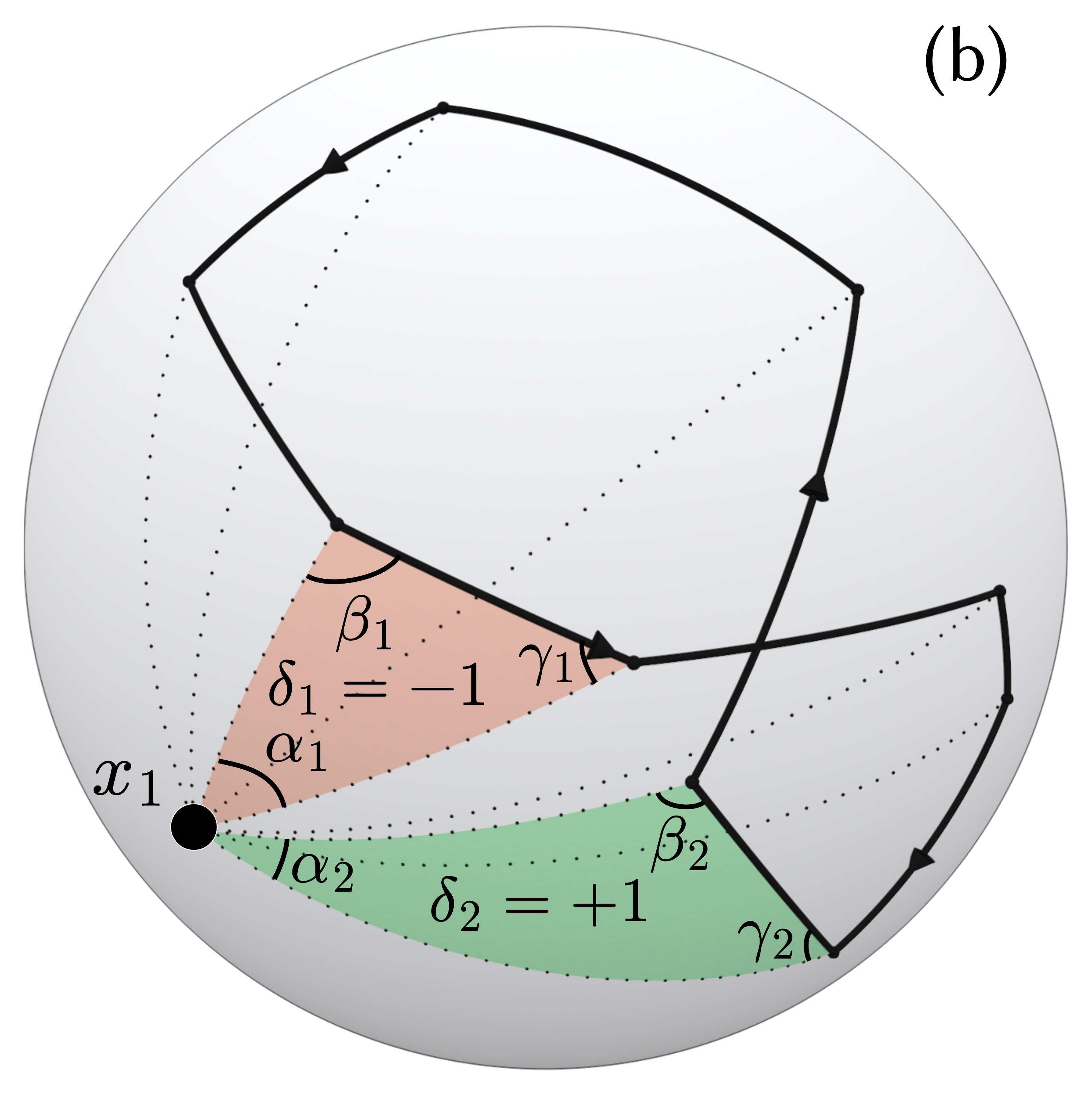}
	\caption{To discretize our formula for meshes, we pick the vector field $X$ to be the ``gradient of height'', i.e., a vector field parallel to the ``meridians'' with the ``poles'', a source and a sink, at antipodal points $\South$ and $\North = -\South$ ((a), $\South$ is on the invisible side). Connecting each segment of the boundary with $\North$ via accordingly oriented flowlines of this field (dashed blue lines), which are shortest geodesics, we get geodesic triangles, some positively oriented, some negatively (b). When summed, the triangle angles $\beta$ and $\gamma$ together measure the geodesic curvature of the curve and its index with respect to this vector field, and $\alpha$ measures the spherical winding number $\Weight_{\South}(\North)$.}
	\label{fig:discretization}
\end{figure}
\begin{equation}
	\label{eq:betas_and_gammas}
	2\pi \ind(\gamma,X) - \int_\gamma k_g ds - \sum_i \theta_i = \sum_i \delta_i (\beta_i + \gamma_i - \pi),
\end{equation}
as $\pi - \beta_i - \gamma_i$ computes the integrated geodesic curvature around the vertex $i$, and the signs change at the points where geodesics are tangent to the curve --- at exactly $2\ind(\Curve,\VF)$ points: $\ind(\Curve,\VF)$ of points of tangency with tangents aligned with the vector field and the same number of points with the opposite tangent orientations. Each sign change adds a $\pi$, yielding the formula. Finally,

\begin{equation}
	\label{eq:sum_of_alphas}
	\Weight_{\South}(\North) = \Weight_{-\North}(\North) = \sum_i \delta_i \alpha_i,
\end{equation}
which can be seen from a stereographic projection $\pi_{\South} : S^2 \; \backslash \; \{\South\} \rightarrow \Rtwo$ with the center of projection at $\South$: All geodesics emanating from $\North$ become straight line segments, as no shortest geodesic from $\North$ passes through its antipodal point $\South$. The map $\pi_{\South}$ is conformal, so the sum $\sum_i \alpha_i$ is the planar winding number around $\pi_{\South}(\North)$. Finally, $\Weight(\North)$ can be computed as the number of signed intersections of a geodesic connecting some point of the spherical region containing $\South$ (where $\Weight = 0$) with $\North$. This geodesic maps to a straight line segment connecting the exterior of the shape (where the planar winding number is 0) to $\pi_{\South}(\North)$. As the stereographic projection is an orientation-preserving homeomorphism and the number of signed intersections is invariant under such maps, the pullback of the planar winding number is equal to $\Weight$. As the signed area of each spherical triangle is $\delta_i (\alpha_i + \beta_i + \gamma_i - \pi)$, the sum of all areas computed by Alg.~\ref{alg:wnr} computes the formula in Prop.~\ref{prop}.


\subsubsection{Ray-Mesh Intersections}
\label{sec:mesh_intersect}
For multiple query points, all the query points on the same line can reuse the intersection ray.  So for a voxelization scenario $W \times H \times D$ ($W\leq H \leq D$),  we only need $W \times H$ rays parallel to the $z$ axis. Assuming we shoot in the positive $z$ direction, once all intersections are computed, the signed intersection number for each query point can be computed by following the ray and adding $\pm 1$ for each intersection, depending on whether the intersected triangle is front-facing or back-facing, respectively.

\begin{figure}
	\includegraphics[width=\columnwidth]{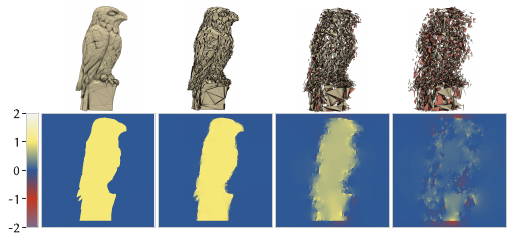}
	\captionof{figure}{Our method works equally well on closed meshes (left), where it becomes a simple inside-outside test, and triangle soups. Maltese Falcon by colinfizgig (Thing ID: 46631). }
	\label{fig:results_soups}
\end{figure}

\subsection{Parametric Surfaces}
\label{sec:parametric}
For a parametric surface, such as NURBS, we directly evaluate Eq.~\ref{eq:simpler_prop}. We compute the $\chi(\South)$ via ray-surface intersection in Sec.~\ref{sec:parametric_int}.

The second term in Eq.~\ref{eq:simpler_prop}, the spherical winding number $\omega_{\South}(\North)$, can be computed as the total number of rotations of $\vec{v}(t)$, the tangent of the shortest geodesic from $\North$ towards $\Gamma(t)$. This can be computed as the radius vector $\vec{\Gamma}(t)-\North$ projected onto the tangent plane at $\North$; note that this is an equivalent of Eq.~\ref{eq:sum_of_alphas}, the same explanation applies. We compute the rate of rotation of $\vec{v}(t)$, denoted as $w(t)$, yielding the spherical winding number.

To compute the last term, we evaluate the vector field in Fig.~\ref{fig:discretization} at each point $\vec{\Gamma}(t)$ as the orthogonal projection of the vector $\South-\vec{\Gamma}(t)$ onto the tangent plane with the normal $\vec{\Gamma}(t)$, so the vector field is $\vec{X}_{\vec{\Gamma}} = \South-\vec{\Gamma}(t) - \left((\South-\vec{\Gamma}(t))\cdot \vec{\Gamma}(t)\right) \vec{\Gamma}(t) = \South - (\South \cdot \vec{\Gamma}(t))\vec{\Gamma}(t)$. To measure $d\eta$, we compute the rate of rotation of the unit vector field $ \frac{\vec{X}_{\vec{\Gamma}(t)}}{\lVert\vec{X}_{\vec{\Gamma}(t)}\rVert}$ in the tangent plane with the normal $\vec{\Gamma}(t)$.

Both of the terms include computing the angular rates of a vector $U(t)$ in a tangent plane with a normal $n$, which we compute as $n \cdot \frac{\vec{U}(t) \times \vec{U}'(t)}{\lVert \vec{U} \rVert^2}$. Having defined those terms, we numerically integrate them via 1D adaptive quadrature (Gauss-Kronrod $GK15$ in our implementation) to evaluate Eq.~\ref{eq:simpler_prop}.

\begin{algorithm}[t]
		\caption{GWN at $\vec{p}$ for a parametric surface $\M$ with boundaries $\vec{\varphi}_i$, $i=1, \ldots, n$.}
		\label{alg:parametric-wnr}
		\DontPrintSemicolon
		\KwIn{$\vec{p}$, $\M$, $\varepsilon > 0$}
		
		$\vec{x}_0 \gets \mathrm{RandomUnitVector()}$ \tcp*{Arbitrary ray direction}
		$\vec{x}_1 \gets -\vec{x}_0$ \tcp*{Antipodal point}
		
		\Fn{$f(\vec{\varphi}, t)$}{
			$\vec{\Gamma} \gets \dfrac{\vec{\varphi}(t) - \vec{p}}{\lVert\vec{\varphi}(t) - \vec{p}\rVert}$ \tcp*{Unit sphere projection}
			
			$\vec{X}_{\Gamma} \gets \vec{x}_0 - \left( \vec{x}_0 \cdot \vec{\Gamma}\right) \vec{\Gamma}$ \tcp*{The vector field at $\vec{\Gamma}(t)$}
			$d\eta \leftarrow \dfrac{\vec{\Gamma} \cdot \left( \vec{X}_{\Gamma} \times X'_{\Gamma} \right)}{\lVert \vec{X}_{\Gamma}\rVert^2}$ \tcp*{The vector field's rotation rate}
			
			$\vec{v} \gets \vec{\Gamma} - \left( \vec{x}_1 \cdot \vec{\Gamma}\right) \vec{x}_1$ \tcp*{Tangent of the geodesic}
			
			$w \leftarrow \dfrac{\vec{x}_1 \cdot \left( \vec{v} \times \vec{v'} \right)}{ \lVert\vec{v}\rVert^2}$ \tcp*{The rotation rate of $\vec{v}$}
			
			\Return $\dfrac{1}{4 \pi}\left(w - d\eta\right)$\;
		}
		
		\Return $\chi + \sum_{i=1}^n \textsc{AdaptiveGaussKronrod15}(f(\vec{\varphi}_i,[0,1]), \varepsilon)$
\end{algorithm}

\subsubsection{Ray-Parametric Surface Intersections}
\label{sec:parametric_int}
For parametric surfaces, in a single query point scenario, we follow \citet{Spainhour2026}: we subdivide surfaces by doing B\'ezier extraction of NURBS patches and bisection of individual B\'ezier patches. We terminate this recursion whenever a patch is approximately bilinear, measured as the Hausdorff distance between the patch and its bilinear approximation being less than $\tau = 10^{-6}$. We then use a closed-form ray intersection test for bilinear patches \cite{reshetov2019cool}. We eliminate all intersections outside the trimming region via raycasting in the parameter domain. For voxelization, we reuse rays (Sec.~\ref{sec:mesh_intersect}). 

Whenever our ray intersection hits too close (within machine precision) to the surface boundary or hits a cusp (surface normal length less than $10^{-4}$) or is tangent to the surface (absolute value of dot product less than $10^{-4}$), we simply discard this ray and shoot another one. Except for the extra intersection test, this does not incur any other cost and makes the implementation more robust.
\section{Results}
\label{sec:results}

\begin{table}[t]
	
		\caption{Performance comparison, per query point, with \cite{Spainhour2026}, on their parametric dataset. Our method integrates over the surface boundary curves; theirs integrates over the patch boundaries.}
	\centering
	\setlength{\tabcolsep}{2.5pt}
	\renewcommand{\arraystretch}{1.0}
	\begin{tabular}{lrrrr}
\toprule
Model Name &
\multicolumn{2}{c}{Timings (ms)} &
\multicolumn{2}{c}{\#Boundary Curves} \\
\cmidrule(lr){2-3}\cmidrule(lr){4-5}
&
Ours & Spainhour &
Patches & Surface \\
\midrule
		Bobbin & \textbf{0.247} & 0.521 & 76 & 13 \\ 
		Bolt & \textbf{2.129} & 2.979 & 1161 & 194 \\
		Boxed Sphere & \textbf{0.284} & 0.971 & 968 & 176 \\
		Connector & 4.109 & \textbf{2.860} & 334 & 126 \\
		Gear & \textbf{19.476} & 29.742 & 10256 & 1160 \\
		Joint & \textbf{0.084} & 1.305 & 564 & 52 \\
		Lamp & \textbf{0.008} & 0.606 & 214 & 0 \\
		Nut & \textbf{0.074} & 0.129 & 66 & 14 \\
		Open Cylinder & \textbf{0.041} & 0.068 & 15 & 13 \\
		Pipe & \textbf{3.516} & 7.076 & 563 & 80 \\
		Sliced Cylinder & \textbf{0.094} & 0.114 & 18 & 14 \\
		Slide & \textbf{0.057} & 3.418 & 1776 & 0 \\
		Spring & \textbf{0.011} & 0.206 & 256 & 4 \\
		2-Patch Spring & \textbf{0.015} & 0.709 & 64 & 4 \\
		Teardrop & \textbf{0.001} & 0.038 & 32 & 0 \\
		Trailer & \textbf{0.244} & 0.856 & 1092 & 86 \\
		Utah Teapot & \textbf{0.015} & 0.067 & 112 & 24 \\
		Vase & \textbf{0.007} & 0.069 & 112 & 4 \\
		\bottomrule
	\end{tabular}

	\label{tab:parametric_benchmark}
\end{table}
We have implemented and parallelized our method in C++ both on CPU and on GPU (OpenGL). Mesh results were evaluated on an AMD Ryzen 9 7950X3D 16-Core CPU with an NVIDIA\textsuperscript{\textregistered} GeForce RTX\textsuperscript{\texttrademark} 4090 GPU, parametric results were evaluated on an Intel\textsuperscript{\textregistered} Core\textsuperscript{\texttrademark} Ultra~9~285K CPU. 
On a CPU, we use Intel Embree \cite{Embree} for ray-mesh intersections. 
On a GPU, we implemented an intersector reusing rays across query points in a grid.
Using raytracing hardware would likely improve the performance further.

We have extensively validated our method's accuracy, performance, and robustness. We demonstrate our results on meshes and parametric surfaces of different complexity (Fig.~\ref{fig:teaser}, Fig.~\ref{fig:results_meshes}, Fig.~\ref{fig:results_parametric}), surfaces with multiple boundaries (Fig.~\ref{fig:teaser}, Fig.~\ref{fig:results_parametric}), polygon soups that technically have numerous boundaries (Fig.~\ref{fig:results_soups}), surfaces of different genus and connected components (Fig.~\ref{fig:results_meshes}). We mostly demonstrate results on meshes with boundary; without boundary our method is a trivial inside-outside test (Fig.~\ref{fig:results_soups}, left). 

\subsection{Performance} 
\label{sec:results_performance}
\paragraph{Meshes} For a mesh with $F$ faces, performing a ray-intersection query using a well-balanced spatial hierarchy is $\mathcal{O}(\log F)$ on average, evaluating the boundary integral for $B$ boundary segments is $\mathcal{O}(B)$. In total, the time complexity of our algorithm for a single query point is $\mathcal{O}(B + \log F)$. For a regular grid of query points, e.g., in a voxelization of $W \times H \times D$ voxels, it takes $\bigO (B+W H (\log F + D))$ in total. The bottleneck of our method is computing the sum of areas, taking almost $92\%$ of the total time.

\begin{center}
	\includegraphics[width=\columnwidth]{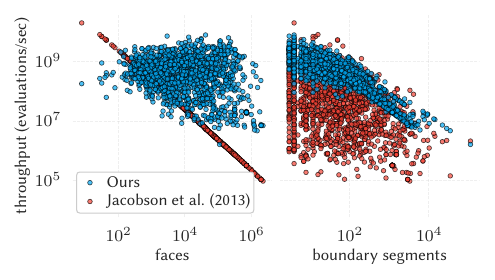}
	\captionof{figure}{
		Performance comparison (GPU) for meshes: Compared to our parallelization of the na\"ive method in \cite{Jacobson13winding}, our method is significantly faster while maintaining the same accuracy.
	}
	\label{fig:performance_boundary_gpu}
	\label{fig:performance_faces_gpu}
\end{center}

We compare our method's performance with the hierarchical winding number algorithm \cite{Jacobson13winding}, Fast Winding Numbers \cite{Barill2018FW}, and the One-Shot method \cite{Martens2025WindingNumberOneShot}. For closed shapes, both our method and One-Shot become a trivial inside-outside test, significantly faster than the other methods. We have run a comparison over 700 randomly selected closed meshes from \citet{Thingi10K}, our method is $73\times-\:127\times$ faster than the hierarchical method of \citet{Jacobson13winding}, $27\times-\:38\times$ faster than \citet{Barill2018FW}. For the rest of the comparisons we focus on the \emph{open} surfaces in \cite{Thingi10K}, 1417 meshes in total. For \cite{Jacobson13winding} and \cite{Barill2018FW} we used the libigl implementations \cite{libigl}. The libigl \verb|parallel for| implementation is  inefficient
; we have replaced it with our implementation that speeds up both their methods and ours. 
For all methods, we do not include precomputation time in the measurements (e.g., hierarchy construction in \cite{Jacobson13winding}), even though they can be substantial; our method's precomputations are significantly faster (Supp.~Fig.~\ref{fig:precomp_scatter}).
For each mesh, we sample $32^3$ query points in a regular grid of the mesh's bounding box. To make the comparison more fair, we do not leverage the regularity of query points, so we do not reuse the rays (Sec.~\ref{sec:mesh_intersect}); with ray reuse, our method is even faster (Supp.~Sec.\ref{sec:ray_reuse}). The results are in Fig.~\ref{fig:performance_boundary}; the speedup factors are in Fig.~\ref{fig:speedup_vs_alec}. 

\begin{center}
	\includegraphics[width=\columnwidth]{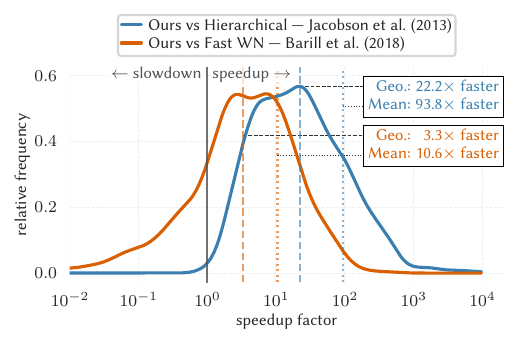}
	\captionof{figure}{
		Performance comparison (CPU) on meshes: 
		Histograms of the speedup factors (log) of our method compared to \cite{Jacobson13winding} ({\color[HTML]{3A7FB0}blue}) and \cite{Barill2018FW} ({\color[HTML]{D95F02}orange}).
		Depending on the summary statistics, we see $22\times$ to $93.8\times$ speedup ({\color[HTML]{3A7FB0}blue}) and $3.3\times$ to $10.6\times$ ({\color[HTML]{D95F02}orange}).
	}
	\label{fig:speedup_vs_alec}
	\label{fig:speedup_vs_fastwn}
	\label{fig:speed_up_cpu_alec}
	\label{fig:speed_up_cpu_fwn}
\end{center}

On a multi-threaded CPU, our method is on average $22\times-\;94\times$ faster than the hierarchical method of \cite{Jacobson13winding} (geometric vs. arithmetic mean of the speedup factors). Our method is also strictly faster on almost all the meshes, except for the rare case of polygon soups, where the number of boundary segments is $\Theta(F)$, so the methods' complexities are similar. In general, even though the method of \cite{Jacobson13winding} has empirical time complexity of $\mathcal{O}(F^{0.55})$ and for regularly tessellated meshes $B \approx \Theta(\sqrt{F})$, so one might think that the two methods will have similar performance at least in the asymptotic sense, we can see that in practice the difference is significant. This phenomenon can be explained by the distribution of the ratio of number of faces and the number of boundary segments in the dataset (Supp. Fig.~\ref{fig:dataset_faces_vs_boundaries}).

\begin{center}
	\includegraphics[width=\columnwidth]{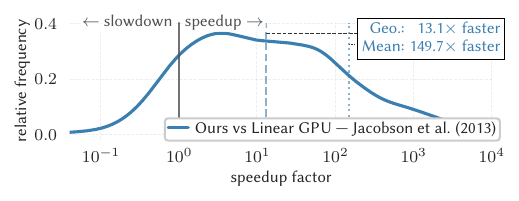}
	\captionof{figure}{
		Performance comparison (GPU) on meshes: The histogram of the speedup factors (log) of our method compared to our parallelization of \cite{Jacobson13winding}. 
		Depending on which summary statistics to use, we see $13\times$ to $149\times$ speedup on the data set.
	}
	\label{fig:speedup_vs_alec_gpu}
\end{center}

Compared with the fast approximation method of \citet{Barill2018FW}, our method is $3.3\times-\;10.6\times$ faster (geometric vs. arithmetic mean). While theoretically their method has a lower time complexity of $\bigO(\log F)$, in practice their method is consistently faster only for very large meshes (>100k faces). Importantly, their method is an approximation, while our method is precise to arbitrary accuracy. We show the comparison with order 2 approximation only; lower-order approximations are faster, but very imprecise (Sec.~\ref{sec:accuracy}, Supp.).

We have also compared our GPU performance with the parallelization of the na\"ive version of \citet{Jacobson13winding}.  Our method is $13\times-\;149\times$ faster (geometric vs. arithmetic mean) and our throughput reaches $10^9$ queries per second for moderately complex meshes, which is equivalent to computing a 4K image at 120 FPS (Fig.~\ref{fig:speedup_vs_alec_gpu}).

Finally, compared with \citet{Martens2025WindingNumberOneShot}, our method is strictly faster: their sphere arrangement is the bottleneck, and their complexity grows exponentially with the number of boundary components. Our method scales linearly with the number of boundary edges and does not depend on the number of boundary components (Supp. Fig.\ref{fig:boundary_processing_performance}).

\paragraph{Parametric Surfaces} We compare the performance of our algorithm with the concurrent method of \cite{Spainhour2026}, using the authors' implementation and their dataset \cite{llnl_axom_data}, as well as with the surface adaptive quadrature (AQ) method. 
To compute line integrals, we use our implementation of 15-point Gauss-Kronrod integrator with a tolerance of $\tau = 10^{-6}$. Following \citet{Spainhour2026}, as adaptive Gauss-Kronrod samples the integrand at predictable parameter values, we cache those function evaluation results for efficiency.

The formulation in \citet{Spainhour2026} shares a similar structure with ours: in both cases, a GWN is a sum of a ray-surface intersection term and a line integral, but the two expressions are different. More importantly, however, their method treats \textit{far-field} query points differently from \textit{near-field} ones, with the latter taking more time; they process all input patches separately. In contrast, we process all query points in the same way and instead of integrating over each patch boundary, we only take the \textit{surface} boundary; this critical difference makes our method significantly more efficient.

We conservatively identify two patch boundary curves $f(t)$, $g(t)$, $t\in [0,1]$ if $| |f-g| |_{L^\infty} < 10^{-8}$. The patch boundary curves that do not get identified with other curves form the surface boundary. While only the surface boundary is integrated, all patch boundary curves are considered for the ray-intersection trim test. 

We compare the performance of our method with the one of \citet{Spainhour2026} for the query points in a voxel grid of $32^3$ resolution (Table~\ref{tab:parametric_benchmark}). Our method is $5.6\times-\:16\times$ faster than theirs on average (geometric vs. arithmetic mean), while being noticeably slower only on one model out of 18 in their dataset. Identifying surface boundaries grants our method a $2.47 \times$ speedup on the dataset.

We also compare our method with Eq.~\ref{eq:wn_solid_angle} integrated via surface AQ with the 15-point Gauss-Kronrod rule. We tested a variety of tolerance values, such as $\tau = 10^{-k}, k \in \{5, 6, 7, 8\}$ on a benchmark of 100 cubic B\'ezier patches with control points randomly generated in the unit cube in a voxelization scenario. On average, our method takes $0.02$ ms while AQ takes $2.26$ ms per query point.

\subsection{Accuracy and Robustness}
\label{sec:accuracy}

\paragraph{Meshes} To analyze accuracy, for each mesh we sample 100 uniformly distributed random points in the bounding box, and compute the ground truth values of the GWN by using \cite{Jacobson13winding} evaluated in double for each triangle, accumulated in a highly accurate double-double accumulator. We then compute the error distribution for \cite{Jacobson13winding} in double, \cite{Barill2018FW} in double with different orders (0,1,2), and our method in both double (\texttt{f64}, both with double accumulator and double-double) and 32-bit float (Fig.~\ref{fig:accuracy}). As the plot shows, our accuracy with \texttt{f64} is on par with both methods in \cite{Jacobson13winding}. Our \texttt{f32} accuracy is at the limit of the type precision. The approximation method \cite{Barill2018FW} is imprecise, especially for the orders 0 and 1. 

The roundoff error is not a major concern for our algorithm: As the difference between our two \texttt{f64} curves show, the roundoff error due to accumulation of small terms is not an issue. The spherical triangle area depends on the \texttt{atan2} function, which is almost correctly rounded in modern implementations \cite{gladman:hal-03141101}.

Fig.~\ref{fig:accuracy} also suggests that the edge cases (e.g., collinear/overlapping boundary segments that are present in $30\%$ of our meshes) are handled correctly. Despite the requirement of transversality for the proofs, in practice it is not a restriction.

The analysis of our method with respect to off-by-integer errors is in the Supp. Sec.~\ref{sec:mesh_robustness}. In a nutshell, we can eliminate all such errors arbitrarily close to the surface at virtually no computational cost.
\begin{figure}[t]
	\centering
	\includegraphics[width=\columnwidth]{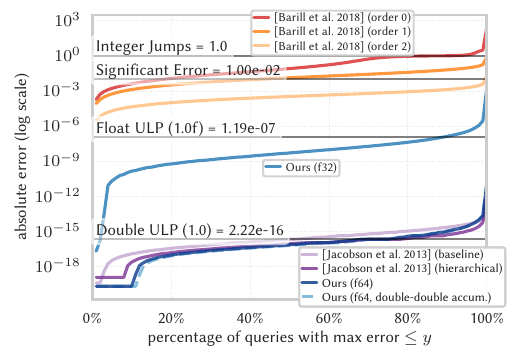}
	\captionof{figure}{
		Accuracy comparison: 
		how many queries in \% are below the plotted absolute error (log). 
		Our method in \texttt{f64} is as precise as the method by \cite{Jacobson13winding}. 
		The approximation method of \cite{Barill2018FW} is very imprecise; 
		orders 0 and 1 can be considered not useful for applications.
		The horizontal lines show important markers: 
		ULP(1.0) is the round-off error for winding number 1. 
	}
	\label{fig:accuracy}
\end{figure}

\paragraph{Parametric Surfaces} 

We validate accuracy of our method on parametric surfaces by comparing it with the state-of-the-art method for parametric surfaces \cite{Spainhour2026} as well as with the GWN of triangulations of the parametric surface.

We compare with \citet{Spainhour2026} on their whole parametric dataset, which includes high-order (>8) curves, over a $50^3$ regular grid of query points. We ignore the query points on the surface (distance < $10^{-6}$), as the GWN there is undefined. For all query points when our results and their results disagree (i.e., the absolute difference is $>10^{-4}$), we compare which method gives closer results to a GWN of high-resolution triangulation (linear deflection of $10^{-5}$). Their method relies on the disk radius parameter. For the default value ($10^{-2}$), their method has high errors compared to a high-resolution triangulation and our method on a few models (0.24 on model ‘Joint’ and 0.1 on ‘Boxed Sphere’). To correct their error, we manually adjusted their radius on these models to $10^{-4}$ and $10^{-3}$ respectively. The final RMSE difference between our results and theirs on all shapes is $8 \cdot 10^{-6}$, max $2.4  \cdot  10^{-3}$, with  99.92\% points differing by less than $10^{-4}$. Our methods differ on some rare points where quadrature is imprecise, either for their method or ours. Over those points, compared to the high-resolution triangulation, \citet{Spainhour2026} differs by $3.6 \cdot 10^{-4}$ (RMSE) and $2.4 \cdot 10^{-3}$ (max); Ours $6.8 \cdot 10^{-4}$ (RMSE) and $1 \cdot 10^{-3}$ (max). Note that while we use the same quadrature tolerance $(10^{-6})$ and maximum depth $(25)$ for both methods, the quadrature schemes are different (we use 15-point Gauss-Kronrod, they use 15-point Gauss-Legendre). Overall, the two methods are comparable in accuracy. 

We also compare the accuracy of our method on parametric surfaces with \cite{Jacobson13winding} on different meshings of the parametric surfaces, as well as with the adaptive quadrature over the surface, integrating Eq.~\ref{eq:wn_solid_angle}. For the meshing experiments, we used nine parametric models from the dataset \cite{llnl_axom_data} with linear deflection of $10^{-k}, k\in\{1,\ldots,5\}$. The error plots are in the supplementary (Supp.~Fig.~\ref{fig:mesh_vs_parametric_accuracy}). Meshing introduces the error around curved surfaces, propagating to the errors near the surface in the GWN field. With the mesh discretization while the applications control the level of tolerated accuracy, knowing the level of discretization necessary to guarantee the winding number accuracy might be difficult if the query points are not known \textit{a priori}. 

To compare accuracy of our method with a surface AQ, we use a random NURBS patch. Supp.~Fig.~\ref{fig:surface_AQ} shows we get higher accuracy while using significantly fewer surface evaluations.  Most of our surface evaluations are performed whenever $x_0$ is near the projected surface boundary. 

%
%
\section{Conclusions and Future Work}
We show that the computation of a winding number is equivalent to a ray-surface intersection and a line integral that includes the index of the projected surface boundary. This insight, upon discretization, leads to a simple, elegant, and effective algorithm that is significantly faster than the state of the art while being precise and robust.

Future work may explore fast approximation of the fractional part of our formula by adaptively simplifying the surface boundary or extending our formulation to other representations like oriented point clouds.

\begin{acks}
We acknowledge the support of the Natural Sciences and Engineering Research Council of Canada (NSERC) under Grant No.: RGPIN-2024-04968 (``Modelling and animation via intuitive input''), the NSERC - Fonds de recherche du Québec - Nature et technologies (FRQNT) NOVA Grant No. 314090, FRQNT team grant No. 361570, and a gift from Adobe. We thank Pierre Poulin for carefully proofreading an early draft of the manuscript and identifying several issues. The teaser models are a modified clock (TheGoofy, Thing ID: 328569) and a bobbin (ABC 933).
\end{acks}

\bibliographystyle{ACM-Reference-Format}
\bibliography{wn}
\clearpage

\begin{center}
	\includegraphics[height=0.90\textheight]{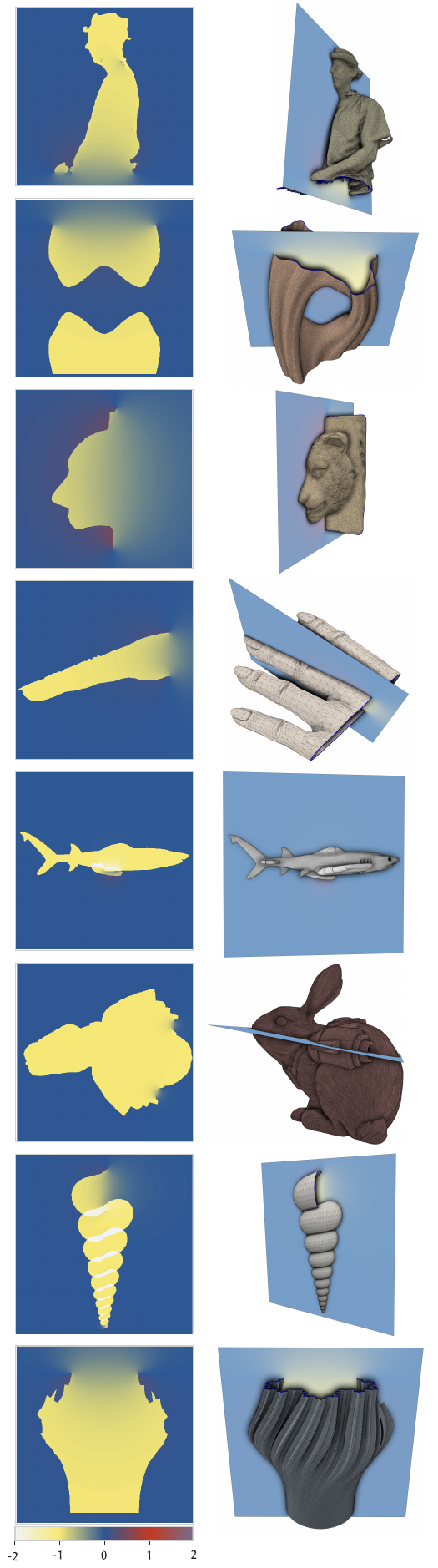}
	\captionof{figure}{
		A gallery of results on meshes. The boundary is highlighted in blue.  Model credits (Author: Thing ID): techknight (28601), virtox (28123), BrianStamile (17342), owenscenic (11618), gpvillamil (124774), Makerbot (226753), TeamTeamUSA (13668), hakalan (104694).
		\label{fig:results_meshes}
	}
\end{center}

\begin{center}
	\includegraphics[height=0.90\textheight]{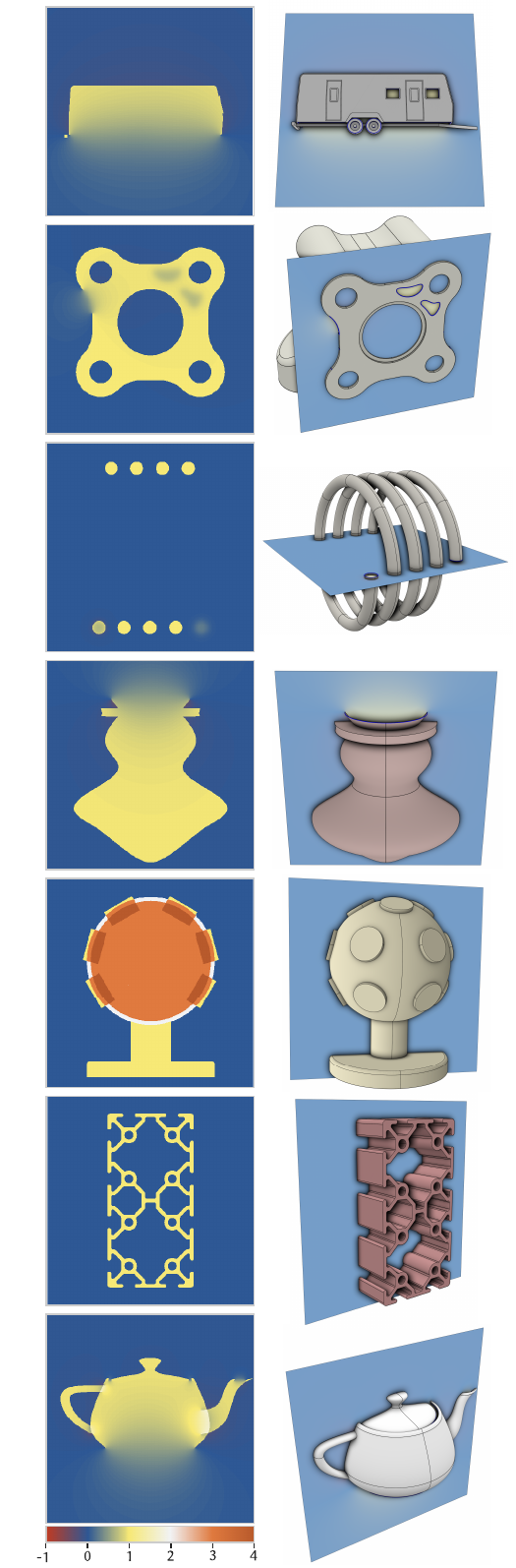}
	\captionof{figure}{
		A gallery of results on parametric surfaces, including trimmed NURBS (top to bottom: a trailer (ABC 4192), a joint (ABC 13), a spring (ABC 86), a vase \cite{Martens2025WindingNumberOneShot}, a lamp (ABC 3800), and a slide (ABC 4237)) and B'ezier (the Utah teapot \cite{newell1975utah_teapot}). Models are from the ABC dataset \cite{Koch_2019_CVPR} unless noted.\label{fig:results_parametric}
	}
\end{center}


\setcounter{section}{0}
\setcounter{figure}{0}
\setcounter{table}{0}
\setcounter{equation}{0}
\setcounter{algocf}{0}

\captionsetup{hypcap=false}
\renewcommand{\theHsection}{supp.\arabic{section}}
\renewcommand{\theHsubsection}{supp.\arabic{section}.\arabic{subsection}}
\renewcommand{\theHsubsubsection}{supp.\arabic{section}.\arabic{subsection}.\arabic{subsubsection}}
\renewcommand{\theHfigure}{supp.\arabic{figure}}
\renewcommand{\theHtable}{supp.\arabic{table}}
\renewcommand{\theHequation}{supp.\arabic{equation}}
\renewcommand{\theHtheorem}{supp.\arabic{section}.\arabic{theorem}}
\makeatletter
\patchcmd{\algocf@caption@algo}
  {\renewcommand{\theHalgocf}{\thealgocf}}
  {\renewcommand{\theHalgocf}{supp.\thealgocf}}
  {}{}
\makeatother

\twocolumn[{%
	\begin{center}
		\vspace{0.5em}
		{\LARGE\bfseries Supplementary Material}\\[0.4em]
		{\large for ``The Antipodal Method: Fast, Accurate, and Robust 3D Generalized Winding Numbers''}
		\vspace{1em}
	\end{center}
}]

\section{Proof of Lemma 4.1}
\begin{proof}
	For each region $\Omega_i$ with boundary $\partial \Omega_i$, we can express the region's area $A_i$ via Gauss-Bonnet theorem on the unit sphere, where the Gaussian curvature $K=1$:

	\begin{equation}
		A_i = \int_{\Omega_i} K \; dA = 2\pi \; \EulerChar_i - \int_{\partial \Omega_i} k_g \; ds - \sum_{j \in \mathcal{V}(\Omega_i)} \theta_j,
	\end{equation}
	where $\EulerChar_i$ is the Euler characteristic and $\theta_j$ are the turning angles of $\Omega_i$ formed by intersections (Fig.~\ref{fig:proof}a).

	\begin{figure}
		\includegraphics[width=\linewidth]{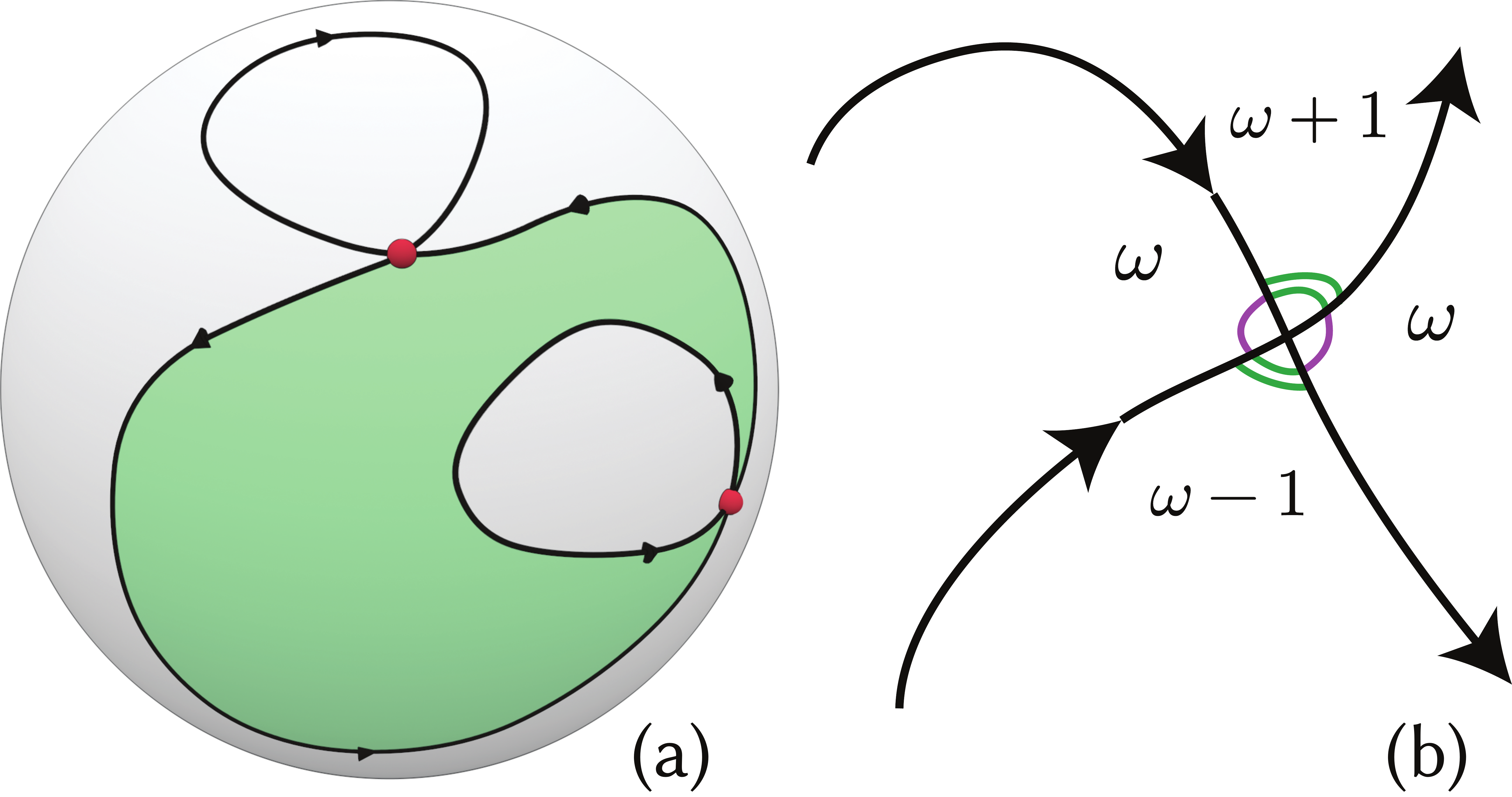}
		\caption{A closed curve $\Curve$ with only transversal self-intersections divides the sphere into regions of Euler characteristic of 1 (a). Around a self-intersection, the spherical winding number follows a particular pattern (b).}
		\label{fig:proof}
	\end{figure}

	Note that each region on the sphere is homeomorphic to an open disk, so all $\EulerChar_i =1$, even if the boundary of those regions has repeated vertices (Fig.~\ref{fig:proof}a), as we are only interested in the open regions (\cite{10.5555/29358}, \S1.4.3). 
	Multiplying each equation by the corresponding value of its spherical winding number $\Weight_i$ and summing, we get:
	\[
	\sum_{i=1}^n \Weight_i A_i = 2\pi \sum_{i=1}^n \Weight_i - \underbrace{\sum_{i=1}^n \Weight_i \int_{\partial \Omega_i} \; k_g \; ds}_{\text{\clap{\encircle{A}}}} - \underbrace{\sum_{i=1}^{n} \sum_{j \in \mathcal{V}(\Omega_i)} \Weight_i \theta_j}_{\text{\clap{\encircle{B}}}}.
	\]
	The second term $\encircle{A}$ traverses each curve segment of $\Curve$ once in each direction. Since every segment is shared between two adjacent regions, say $\Omega_1, \Omega_2$, with $\Omega_1$ lying on the left side of the curve with respect to its orientation, then $\Weight_{2} = \Weight_{1} + 1$ due to the Alexander numbering. Then, since reversing the boundary of the integral negates the integral, the second term $\encircle{A}$ simplifies to $\int_\Curve k_g \; ds$.

	To simplify $\encircle{B}$, we assemble angles around each curve's self-intersection. For the $k^\mathrm{th}$ self-intersection, $k \in \Intersections$, we denote as $\Weight_k$ the spherical winding number of the region bounded by the two curves in the counterclockwise direction. Due to the Alexander numbering, the other $\Weight$ values follow the pattern in Fig.~\ref{fig:proof}b.

	Considering equal pairs of angles in Fig.~\ref{fig:proof}b, at $k^\textrm{th}$ intersection the sum of the adjacent angles is equal to $\frac{2\pi}{4} \sum_{j=1}^4 \Weight_{kj}$. Hence,

	\begin{equation}
		\sum_{i=1}^n \Weight_i A_i = 2\pi \left(\sum_{i=1}^n \Weight_i - \frac{1}{4}\sum_{k\in \Intersections} \sum_{j=1}^4 \Weight_{kj}\right) - \int_{\Curve} k_g ds.
	\end{equation}

	To complete the proof we will show by induction that

	\begin{equation}
		\ind(\Curve, \VF) = \sum_{i=1}^n \Weight_i -  \frac{1}{4}\sum_{k \in \Intersections} \sum_{j=1}^4 \Weight_{kj}.
		\label{eq:index_proof}
	\end{equation}

	For a simple curve, this is trivially true: The vector field has no singularities inside the curve (by our assumption, singularities are in the region with $\Weight_i=0$), so $\ind(\Curve,\VF)=\pm 1$ with the sign depending on the curve orientation (c.f. Lemma 2.6 in \cite{Chillingworth1972}). This is equal to $\sum_{i=1}^n \Weight_i$, where there are exactly two regions, so $n=2$, one $\Weight_i=\pm 1$, the other one is zero, and $\frac{1}{4}\sum_k^{|\Intersections|} \sum_j^4 \Weight_{kj} = 0$.

	Let us assume the statement is true for curves with $m-1 \geq 0$ self-intersections. Given a curve with $m > 0$ self-intersections, it is not simple, so it contains a null-homotopic loop $\Curve_0$ that contains nothing else, with a complement $\Curve_{m-1}$ (c.f. Lemma 2.1, \cite{Chillingworth1972}). The index is linear with respect to the curve concatenation, so $\textrm{ind}(\Curve,\VF) = \textrm{ind}(\Curve_0,\VF) + \textrm{ind}(\Curve_{m-1},\VF)$ and $\textrm{ind}(\Curve_0,\VF) = \pm 1$. By induction, Eq.~\ref{eq:index_proof} is true for $\Curve_{m-1}$. At the same time, inserting $\Curve_0$ onto the sphere only changes the sum on the right-hand side by the same $\pm 1$, because the first sum increases either by $\Weight_k$ or $\Weight_k-2$ (Fig.~\ref{fig:proof}b), the only two regions with consistent boundary orientation; if we are inserting the top region or if we are adding the bottom region, respectively, and the second sum increases by $\frac{1}{4}\sum_j \Weight_{kj} = \Weight_k-1$, so the difference is $\pm 1$.
\end{proof}

\section{Robustness and Mesh Predicates}
\label{sec:mesh_robustness}
To analyze the robustness of Alg.~\ref{alg:wnr} in the main document, we note that there are a few potential sources of errors worth attention: the summations of small terms, the spherical area computation, the intersection number computation, and the singularity in the projection onto the unit sphere. The roundoff error is discussed in the main document (Sec.~\ref{sec:accuracy}).


Then, there are two potential sources of off-by-integer errors: The \texttt{atan2} function has a $2 \pi$ discontinuity and the signed intersection number depends on ray-triangle intersection, which is discontinuous at the triangle boundary and has numerical issues with coplanar rays. Mathematically, these discontinuities are coupled: As we vary the query point away from the surface, the winding number changes smoothly, so a discrete change in one cancels the discrete change in the other.

To make our computation unconditionally robust, we need to ensure that these discontinuities are also coupled numerically. To this end, we leverage symbolic perturbation \cite{SimulationOfSimplicity1990, SymbolicTreatmentOfGeometricDegeneracies, QualitativeSymbolicPerturbation} and exact integer predicates \cite{NEHRINGWIRXEL2021103015, Trettner2022}. Below we show that the predicate for ray-triangle intersection and \texttt{atan2} share exactly the same discontinuity (Suppl. Sec.~\ref{sec:exact_predicates}).

If the ray with the direction $d$ hits the boundary, we set $d$ to another arbitrary direction. To avoid dealing with the query points on the surface, we symbolically perturb them, as explained below.

\subsection{Robust Predicates for Meshes}
\label{sec:exact_predicates}
Winding numbers are not defined on the surface, so instead of using the real-valued query point $q = (x, y, z)$, we use the symbolically perturbed query point $q_\varepsilon = (x + \varepsilon_1, y + \varepsilon_2, z + \varepsilon_3)$ with $0 \ll \varepsilon_3 \ll \varepsilon_2 \ll \varepsilon_1 \ll v$ for any $v \in \R, v > 0$, where $\varepsilon_i$ are infinitesimal numbers.
This infinitesimal move of the query point does not change the real winding number in the volume and yields a consistently defined number whenever $q$ is on the surface.

To simplify the analysis, we choose the ray direction $d = (0, 0, 1)$, turning the ray-triangle intersection test (the computation of signed intersection number $\chi$ in Alg~\ref{alg:wnr}) into a point in 2D triangle test on the $xy$-plane. Let the projected triangle be defined by $p_i = (x_i, y_i)$ for $i \in \{0, 1, 2\}$ and consider an edge $e_{01}$ from $p_0$ to $p_1$.
We can then define the outward 2D normal as $n_{01} = (y_0 - y_1, x_1 - x_0)$.
To classify the query point $q_\varepsilon$ with respect to this edge, we compute $\text{sign}((q_\varepsilon - p_0) \cdot n_{01})$, which expands to $\text{sign}((x + \varepsilon_1 - x_0)(y_0 - y_1) + (y + \varepsilon_2 - y_0)(x_1 - x_0))$. A projected query point is inside the projected triangle if all edge classifications are negative.
The argument to the sign is linear in $\varepsilon_i$.
It is evaluated via coefficient signs in lexicographic order: we evaluate the real part $(x - x_0)(y_0 - y_1) + (y - y_0)(x_1 - x_0)$ first.
If it is zero, we check the sign of the $\varepsilon_1$ coefficient, which is $y_0 - y_1$.
If that is also zero, we check the $\varepsilon_2$ coefficient, $x_1 - x_0$.
If this is also zero, we have $y_0 = y_1$ and $x_0 = x_1$, a degenerate triangle that is safe to ignore because it cannot be hit. Note that $\varepsilon_3$ does not participate in the ``inside-the-triangle'' test, only for ``which-side-of-the-triangle'', i.e.\ the supporting plane test.

Now we analyze the discontinuity in \texttt{atan2} in the spherical triangle area computation in Alg.~\ref{alg:wnr}. The real part is the sign of a $2 \times 2$ determinant, which can be evaluated using exact predicates.
To avoid the floating points, we apply the integer predicate approach of \cite{NEHRINGWIRXEL2021103015, Trettner2022} and discretize the input into a fixed-point representation.
Our predicate is simple enough that a $30$-bit quantization can be realized using $64$-bit integer arithmetic at virtually no performance impact.
Even a $60$-bit quantization (using $128$-bit arithmetic for the real sign) is barely measurable due to the \texttt{atan2} overhead.

The discontinuity is at sign changes of the numerator $\text{sign}((-d) \cdot (a \times b))$, which expands to $\text{sign}((x_1 - x - \varepsilon_1)(y_0 - y - \varepsilon_2) - (y_1 - y - \varepsilon_2)(x_0 - x - \varepsilon_1))$.
The real, $\varepsilon_1$, and $\varepsilon_2$ coefficients are the same as for the ray-triangle predicate, ensuring that as long as these signs are evaluated consistently, both discontinuities will always coincide in the numerical evaluation as well, leading to a robust solution.

\begin{figure}
	\includegraphics[width=0.7\columnwidth]{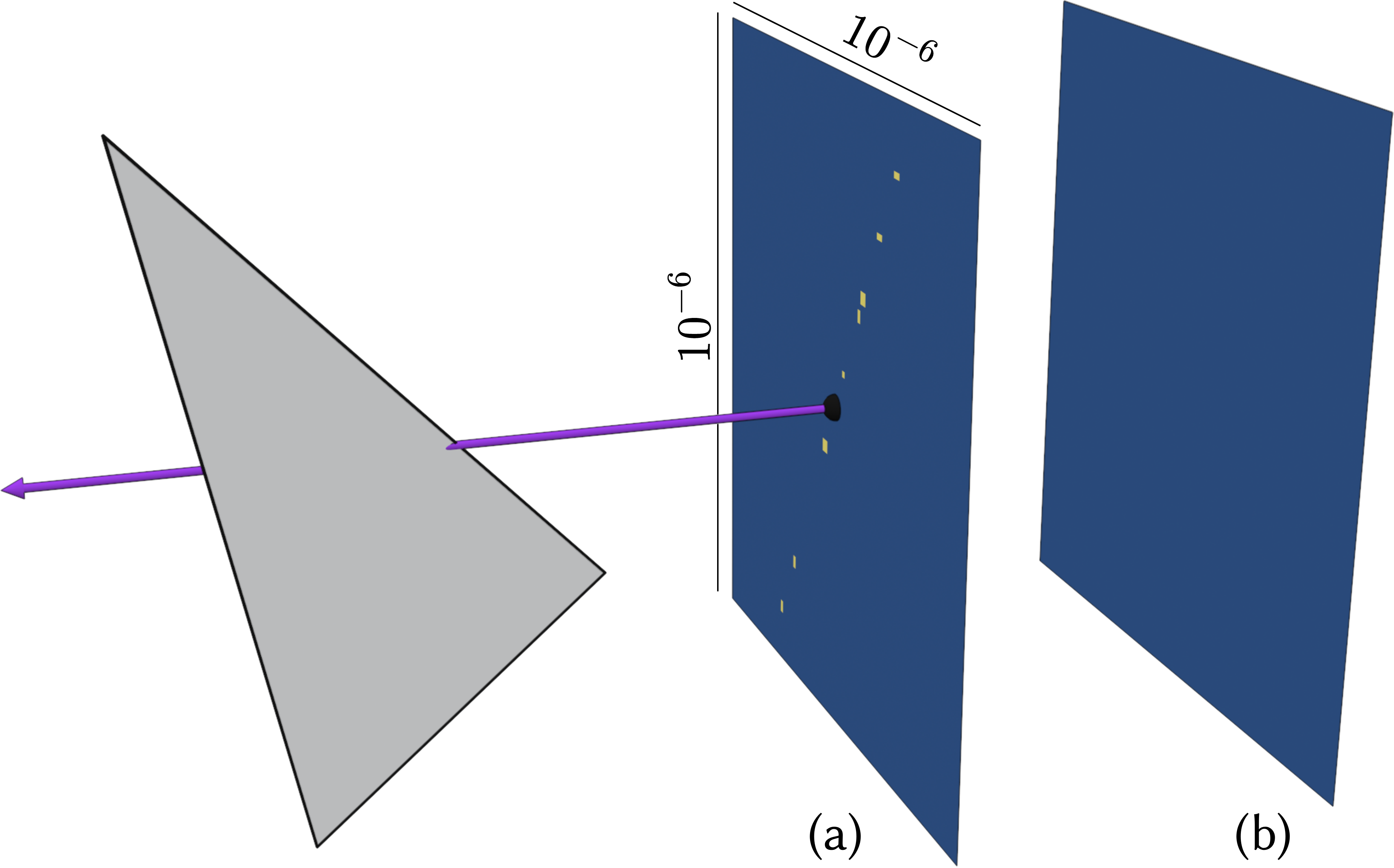}
	\captionof{figure}{
		Robustness test: We compute GWN for rays grazing the boundary of a triangle, where we move the query point in a small neighborhood (the square is a blow-up of a square with a side of $10^{-6}$). A na\"ive floating point implementation has integer jumps when the ray-intersection and the quadrant selection of the \texttt{atan2} are out of sync  (left, yellow dots). Our robust predicates guarantee no off-by-integer errors (right, no yellow dots).
	}
	\label{fig:results_robustness}
\end{figure}

\section{Ray-Mesh Intersections}
\label{sec:impl_details}
\label{sec:ray_reuse}

When query points lie on a regular grid, rays can be reused along each line of query points.
For a $W \times H \times D$ voxel grid ($W \leq H \leq D$), this reduces the number of ray-mesh intersections by a factor of up to $D$.

We simulated this optimization on our benchmark by scaling down the ray intersection cost accordingly and computed the resulting speedup over our baseline timings (Table~\ref{tab:ray_reuse}).
This is the same data set as in Fig.~\ref{fig:performance_boundary} of the main document.
Our timing consists of $t_\text{fractional} + t_\text{integer}$ where $t_\text{integer}$ is computed by ray-casting.
We added artificial methods $t_\text{fractional} + t_\text{integer} \cdot f$ for $f \in \{\frac{1}{32}, \frac{1}{1024}, 0\}$ to simulate perfectly implemented ray-use in a $32^3$ cube scenario, a $1024^2$ slice scenario, and the limit case of ``infinite ray use'', i.e.\ free raycasting.
This yields upper bounds for the obtainable speedups.

Even in the theoretical limit of free raycasting, the average speedup is only $1.38\times$ (geometric mean) or $1.86\times$ (arithmetic mean).
This is an instance of Amdahl's law: the fractional winding number evaluation dominates the total cost for most meshes.
The high maximum speedups occur for meshes with many faces but few boundary segments, where ray intersections constitute almost the entire computation.

\begin{table}[h]
	\centering
	\begin{tabular}{lrrrr}
		\toprule
		Scenario & Geo.\ Mean & Arith.\ Mean & Max \\
		\midrule
		$32^3$ voxel grid & $1.35\times$ & $1.51\times$ & $29.74\times$ \\
		$1024^2$ slice & $1.38\times$ & $1.77\times$ & $292.04\times$ \\
		Free raycasting & $1.38\times$ & $1.86\times$ & $408.16\times$ \\
		\bottomrule
	\end{tabular}
	\caption{Simulated speedup from ray reuse over our baseline method (no ray reuse). Even with zero ray intersection cost, the speedup is modest on average because the boundary integral dominates the runtime for most meshes.}
	\label{tab:ray_reuse}
\end{table}

\begin{figure*}[t]
	\includegraphics[width=\linewidth]{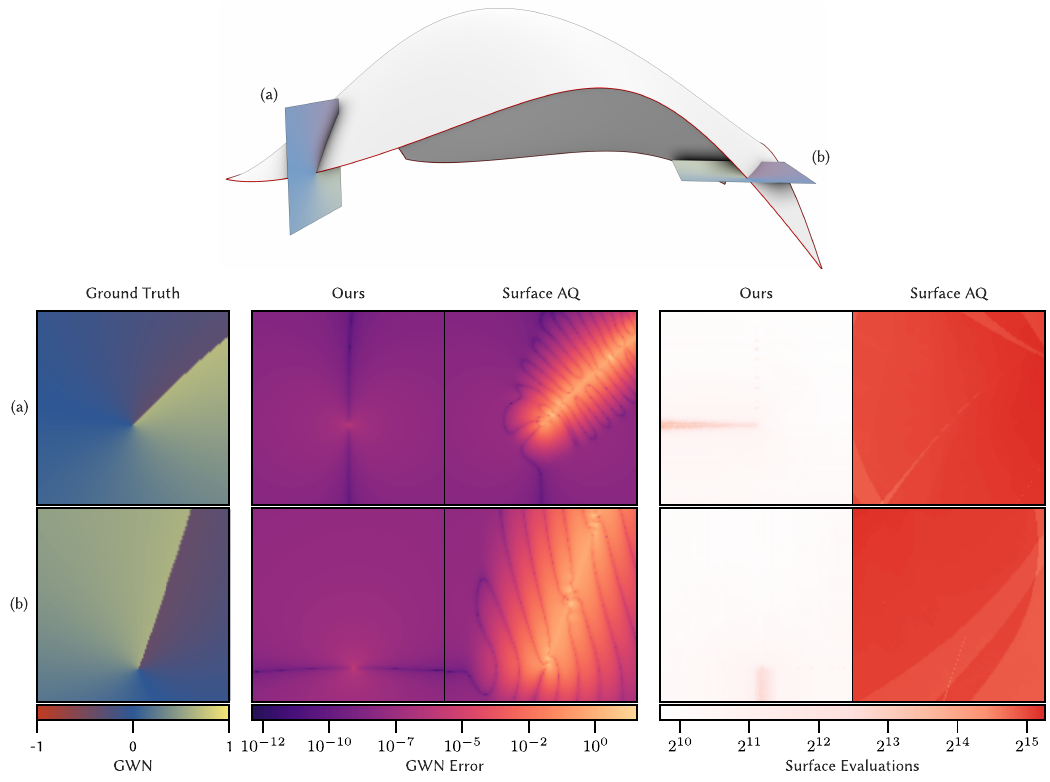}
	\caption{
		Compared with surface adaptive quadrature on a sample surface (above), our method is much more accurate while using orders of magnitude fewer surface evaluations, which explains our speedup (all plots are log scale). Note that our surface evaluations are concentrated at query points where the boundary projects near the singularities.
	}
	\label{fig:surface_AQ}
\end{figure*}

\section{Gradient of the GWN}

\begin{algorithm}[t]
		\caption{Gradient of GWN at point $\vec{p}$ for a mesh $\Mesh$ with boundary $\partial \Mesh$}
		\label{alg:dwnr}
		\KwIn{$\vec{p}$, $\Mesh$}
		$\vec{x_0} \gets \mathrm{RandomUnitVector()}$  \;
		$\vec{x_1} \gets -\vec{x_0}$\;
		$\nabla Area \gets \vec{0}$ \tcp*{Accumulates spherical area derivative}
		\ForEach{$(\vec{a},\vec{b}) \in \partial \Mesh\;$}{
			$\vec{r_0} \gets \vec{a} - \vec{p}$,\quad$\vec{r_1} \gets \vec{b} - \vec{p}$\;
			$l_0 \gets \lVert \vec{r_0} \rVert$,\quad$l_1 \gets \lVert \vec{r_1} \rVert$\;
			$\vec{v_0} \gets \frac{\vec{r_0}}{l_0}$,\quad$\vec{v_1} \gets \frac{\vec{r_1}}{l_1}$ \tcp*{Unit directions from $\vec{p}$}

			$n \gets \vec{x_1} \cdot (\vec{r_0} \times \vec{r_1})$ \tcp*{atan2 numerator}
			$d \gets l_0 l_1 + l_1 \vec{x_1} \cdot \vec{r_0} + l_0 \vec{x_1}\cdot \vec{r_1} + \vec{r_0}\cdot \vec{r_1}$ \tcp*{atan2 denominator}

			$c_n \gets \frac{2d}{(n^2 + d^2)}$ \tcp*{$2\,\partial \mathrm{atan2}(n,d)/\partial n$}
			$c_d \gets \frac{-2n }{(n^2 + d^2)}$ \tcp*{$2\,\partial \mathrm{atan2}(n,d)/\partial d$}

			$\vec{g_n} \gets -(\vec{r_1} \times \vec{x_1} + \vec{x_1} \times \vec{r_0})$ \tcp*{$\partial n / \partial \vec{p}$}
			$\vec{t_1} \gets l_1(\vec{x_1} + \vec{v_1} + (1 + \vec{x_1}\cdot\vec{v_1})\vec{v_0})$\;
			$\vec{t_2} \gets l_0(\vec{x_1} + \vec{v_0} + (1 + \vec{x_1}\cdot\vec{v_0})\vec{v_1})$\;
			$\vec{g_d} \gets -(\vec{t_1} + \vec{t_2})$ \tcp*{$\partial d / \partial \vec{p}$}

			$\nabla Area \gets \nabla Area + c_n \vec{g_n} + c_d \vec{g_d}$\;
		}
		\Return{$\dfrac{\nabla Area}{4\pi}$}
\end{algorithm}

	The winding number we compute is differentiable with respect to the query point almost everywhere. In particular, the derivative of the integer part of our formula is 0 almost everywhere. The last term of Eq.~\ref{eq:simpler_prop} smoothly depends on the surface boundary $\partial M$, i.e., its projection $\vec{\Gamma}$. Non-differentiability only arises when $\vec{\Gamma}$ passes through the chosen points $\South, \North$. We provide pseudocode for the gradient of our GWN for triangle meshes (See Supp. Algorithm~\ref{alg:dwnr}).

\section{Additional Comparisons}

We also compare the error distribution and the number of surface evaluations of surface adaptive quadrature with our method (see Sec.~\ref{sec:accuracy} in the main document) in Fig.~\ref{fig:surface_AQ}. We can see that our method has much higher accuracy using orders of magnitude fewer surface evaluations, which explains the time performance difference (our method is more than $110\times$ faster than surface adaptive quadrature).

\begin{figure}[t]
	\includegraphics[width=\linewidth]{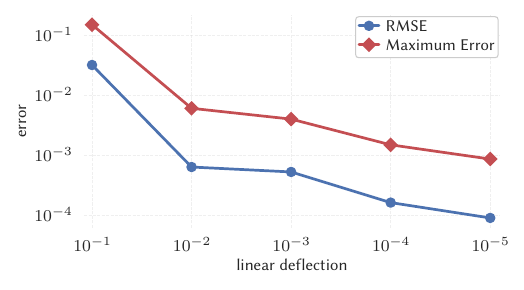}
	\caption{
		Comparing the accuracy of our method on parametric surfaces with \cite{Jacobson13winding} on meshes. We plot the average RMSE error as well as the maximum error of the GWN field over 9 parametric models from the dataset \cite{llnl_axom_data}, triangulated via \cite{OpenCASCADE2011} with different linear deflections.
	}
	\label{fig:mesh_vs_parametric_accuracy}
\end{figure}

\begin{figure}
	\includegraphics[width=\linewidth]{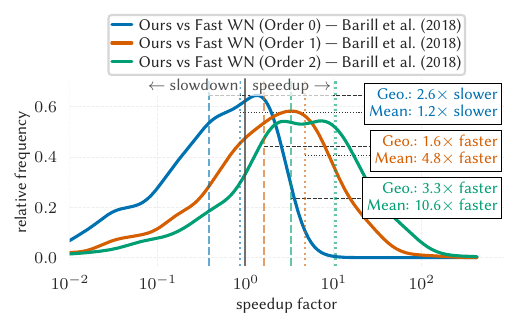}
	\caption{
		Histograms of the speedup factors (log) of our method compared to \cite{Barill2018FW} with different orders.
		This is the same data set as for Fig.~\ref{fig:speedup_vs_fastwn} of the main document.
		Our precise method is only slower than order 0, their roughest approximation, but faster than their orders 1 and 2.
		Note that the approximation errors of orders 0 and 1 are too high for practical use.
	}
	\label{fig:histograms_fw}
\end{figure}

\bibliographystyle{ACM-Reference-Format}
\bibliography{wn}

\begin{figure}
	\includegraphics[width=\linewidth]{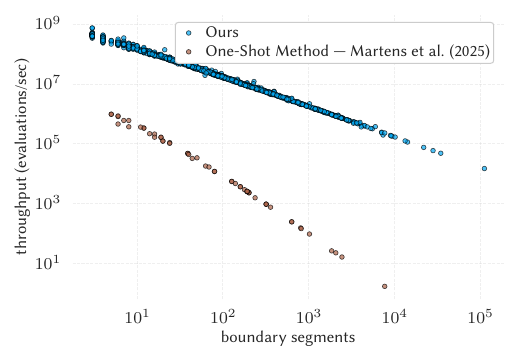}
	\caption{
		Our method is strictly faster than the one of \cite{Martens2025WindingNumberOneShot} (log-log plot, the higher the better). We test their method on the meshes from \cite{Thingi10K} with a single boundary component, their method does not support other cases. Their method requires computing an expensive spherical arrangement; instead, we only compute a line integral, explaining the orders of magnitude difference in performance.
	}
	\label{fig:boundary_processing_performance}
\end{figure}

\begin{figure}
	\includegraphics[width=\linewidth]{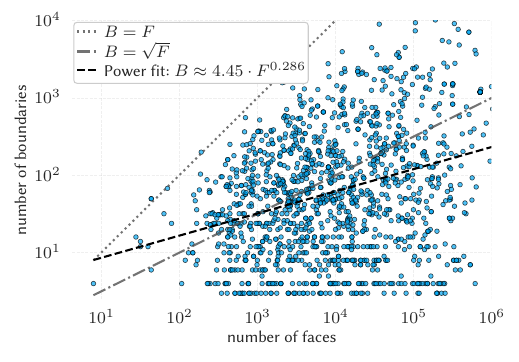}
	\caption{
		Log-log distribution of number of faces vs number of boundary segments in the open subset of the Thingi10K dataset.
		Theoretically, we would expect $O(\sqrt{F})$ many boundaries.
		In practice, we see $O(F^{0.286})$ on this data set but at the same time there's a significant amount of meshes with closer to $O(F)$ many.
	}
	\label{fig:dataset_faces_vs_boundaries}
\end{figure}

\begin{figure}[t]
	\includegraphics[width=\linewidth]{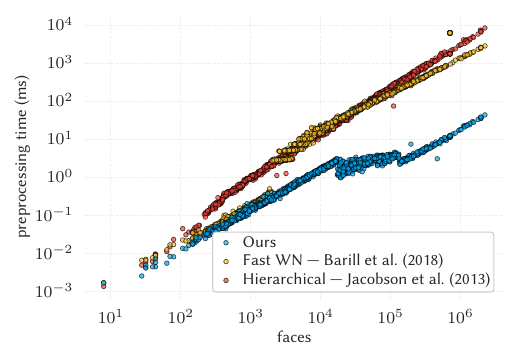}
	\caption{
		Preprocessing time (ms) vs.\ number of input faces on a log-log scale, comparing our method with \cite{Barill2018FW} and \cite{Jacobson13winding}.
			Both prior methods build custom acceleration structures whose construction cost grows substantially with mesh size.
			Our preprocessing is dominated by the Embree ray-intersection BVH, which stays below 100\,ms even for large meshes.
	}
	\label{fig:precomp_scatter}
\end{figure}

\end{document}